\newtheorem{thm}{Theorem}[section]
\newtheorem{cor}[thm]{Corollary}
\newtheorem{lem}[thm]{Lemma}
\theoremstyle{definition}
\theoremstyle{remark}
\newtheorem{remark}[thm]{Remark}
\numberwithin{equation}{section}
\numberwithin{thm}{section}
\DeclareMathOperator{\RE}{Re}
\DeclareMathOperator{\IM}{Im}
\newcommand{\R}{{\mathord{\mathbb R}}}
\newcommand{\W}{\mathcal{W}}
\newcommand{\C}{{\mathord{\mathbb C}}}
\def\idty{{\mathchoice {\mathrm{1\mskip-4mu l}} {\mathrm{1\mskip-4mu l}} %
{\mathrm{1\mskip-4.5mu l}} {\mathrm{1\mskip-5mu l}}}}
\DeclareMathOperator{\Tr}{Tr}
\DeclareMathOperator{\diag}{diag}
\begin{document}

\title[Dynamical entanglement]{Dynamical Evolution of Entanglement in  Disordered Oscillator Systems}

\author[H. Abdul-Rahman]{Houssam Abdul-Rahman}
\address{Mathematics, New York University Abu Dhabi}
\email{ha2271@nyu.edu}
\date{\today}

\begin{abstract}
We study the non-equilibrium dynamics of a
disordered quantum system consisting of harmonic oscillators in a $d$-dimensional lattice. If the system is sufficiently localized, 
we show that, starting from a broad class of initial product states that are associated with a tiling (decomposition) of the $d$-dimensional lattice, the dynamical evolution of entanglement follows an area law in all times. Moreover, the entanglement bound reveals a dependency on how the subsystems are located within the lattice  in dimensions $d\geq 2$. In particular, the entanglement grows with the maximum degree of the dual graph associated with the lattice tiling.  
\end{abstract}

\maketitle

%
%

\allowdisplaybreaks

\section{Introduction}

Understanding the time evolution of entanglement after a quantum quench is essential in determining scenarios where the transmission of quantum information is possible. For this, dynamical entanglement has received strong attention in the  condensed matter physics and quantum information theory literature, see e.g., \cite{Calabrese06, Rangamani17, Alba-Calabrese17,Alba-Calabrese18, QQ-Mitra18, QQ-Lewis-Swan19, QQ-Klobas21}. 

The importance of studying the dynamics of entanglement in the presence of disorder falls within the quest of  a better understanding of {\it many body localization} (MBL), a phenomenon that has recently attracted extensive research activity in the physics literature, see e.g.,  \cite{AbanicPepic17,Agarwaletal,Altmam-etal-15,NH} for recent reviews with extensive lists of references. The general mathematical  understanding of the MBL phenomenon is far from satisfactory. To gain further insights about MBL, it is useful to study simple (toy) models. A number of localization results have been made in recent years for some  disordered many body systems: The XY chain  \cite{HSS12, PasturSlavin15, ARS15, SimsWarzel16,ARNSS16}, and its continuum counterpart model,  the Tonks-Girardeau gas \cite{SeiringerWarzel16},   the quantum Ising model  \cite{Imbrie16, ImbrieEtal17}, the attractive XXZ chain (known as the XXZ chain in the Ising phase) \cite{EKS1, EKS2, BW1, BW2, ARFS19},  the many body Holstein model \cite{MS-Holstein-17}, and the harmonic oscillator systems. 


Transport  in the systems of harmonic oscillators coupled by springs has been studied recently in the gapped case, see  \cite{ALN2010, Cr-Eis2006, Cr-Se-Eis2008, Mat-Ish1970,Sch-Cir-Wol2006}, in the form of Lieb-Robinson estimates as well as  exponential clustering of ground-state correlations.
A more recent set of papers considered the gapless disordered case in the context of MBL, see \cite{NSS12, NSS13, ARSS17,AR18,ARSS18,BSW19}. The analysis relies on the fact that the system of coupled harmonic oscillators reduces to a (disordered) free boson system, and hence it is a completely solvable model. 

Many MBL indicators are shown for the disordered harmonic oscillator systems under consideration in this work.  
The results in \cite{NSS12, NSS13, ARSS17,BSW19}  show that for a class of such disordered systems, localization for the effective one particle Hamiltonian implies MBL for the many body system. In particular, zero velocity Lieb-Robinson bounds are established, exponential clustering for the energy eigenstates and thermal states correlations are proven; and area laws for entanglement of some gaussian states (ground and thermal states) are shown. The exponential decay of correlations of the eigenstates is established in \cite{ARSS17} and it indicates an area law for the eigenstates (at least for the low lying eigenstates), see \cite{BH13,BH15}.  \cite{ARSS18} investigates more the implication between the single body localization and the MBL by studying oscillator systems that are associated with a partially localized  effective one particle Hamiltonian and it shows a MBL phase of excited states with arbitrarily large energy density. Weak entanglement for the eigenstates, demonstrated in area laws (or logarithmic scaled area laws \cite{MullerEtal20}) for the eigenstates, is considered as an essential component in understanding the MBL phenomena. The major obstacle in studying the entanglement of the eigenstates of the harmonic oscillator systems is that they are non-gaussian states. Progress in analyzing such states was made in \cite{AR18} where a class of a non-gaussian states of (disordered) harmonic oscillators is defined, and a method that provides an exact formula for their entanglements is developed.

Recent studies seem to indicate that dynamical entanglement may not, generically, satisfy an area law. For example, \cite{PlenioEtal} considers a set of  30 uncoupled harmonic oscillators cooled down to (near) the ground state and subsequently switch on the coupling suddenly (to produce a one-dimensional chain). In stark contrast with the entanglement property of the stationary ground state of the coupled system,  entanglement seems to generate over  large distances in this dynamical setting; showing a scenario where the transmission of quantum information may be possible. A natural question to ask then: Does this behavior persist in the presence of disorder?

In this work, we analytically investigate a similar dynamical setting but for the disordered gapless case and in any dimension. In particular, we consider a finite number of (disordered) local harmonic oscillator systems, each consisting of a finite number of coupled oscillators, to be initially prepared at local thermal/ground states (with possibly different temperatures for the thermal states). Then we suddenly switch on the interaction, to drive the initial product state out of equilibrium, and we study the dynamical evolution of entanglement. This scenario is also described in the literature as the study of entanglement after a \emph{quantum quench}, see e.g., \cite{Rangamani17,  QQ-Mitra18}. The case when each local system is consisting of exactly one harmonic oscillator is a special case of our general setting. We show that the dynamical entanglement follows an area law in all times. This means  that the dynamical evolution of entanglement after coupling all local systems has an upper bound that is independent of the volume of the system, and it scales like the surface area of the initial region. Moreover, the explicit upper bound for the dynamical entanglement is independent of the initial local thermal/ground states. An observation about this estimate in $d\geq 2$ is that it depends on the geometry of the interactions among the initially non-coupled subsystems. More precisely, the entanglement grows with the maximum degree of the dual graph associated with the initially uncoupled subsystems and their subsequent couplings.

The time evolution of our initial states is a gaussian (and generally mixed) state, for which the entanglement is characterized  by correlation matrices. We build on the results of the well know approach of Vidal and Werner \cite{VidalWerner}, see also \cite{NSS13, AR18, BSW19}, to find an exact formula for the logarithmic negativity of the time evolution. As opposed to all existing analysis for similar problems, the correlation matrices are not block-diagonal, moreover, the blocks include many operators that are not uniformly bounded in the volume of the system. This makes the process of obtaining a practical upper bound of entanglement  far from being a trivial manipulation of existing techniques.

This paper is organized as follows. In Section \ref{subsec:Model} we introduce the model and the localization assumption on the effective one-particle Hamiltonian.  Then, in Section \ref{subsec:Free}, we present briefly the reduction of the harmonic oscillator systems to a free boson system. Section \ref{subsec:Result} includes a precise description of the dynamical setting, the main theorem, Theorem \ref{thm:main-result}, and a detailed discussion of the result with some extreme and special cases.

The proof of the area law spans the rest of the paper. In Section \ref{subsec:LN:Weyl} we introduce  the Weyl operators, and we explain how they relate to the correlation matrices of the time-evolved states. This forms the first step in the derivation of an exact formula, Theorem \ref{thm:LN}, for the logarithmic negativity of the dynamically evolved states, see Section \ref{subsec:LN:Formula}. We then find an upper bound for the entanglement in terms of the Schatten quasi-norms, Lemma \ref{lem:LN:upperbound}, in Section \ref{subsec:LN:UpperBound}.
Section \ref{sec:AreaLaw} finishes the proof of the main result, and it includes the crucial lemma, Lemma \ref{lem:bound}, that draws the link between the entanglement upper bound and the geometry of the initial decomposition of the system.

\section*{Acknowledgment}
The author would like to thank Robert Sims and G\"{u}nter Stolz for insightful  discussions and comments.

\section{Model and Main Results}\label{sec:Model}
\subsection{Model}\label{subsec:Model}
We study the quantum harmonic oscillator systems over the finite rectangular box $\Lambda=\left([a_1,b_1]\times[a_2,b_2]\times \ldots\times[a_d,b_d]\right)\cap\mathbb{Z}^d$, for any $d\geq 1$
\begin{equation}\label{H_Lambda}
H_\Lambda=\sum_{x\in\Lambda}(p^2_x+k_x q_x^2)+\sum_{x,y\in\Lambda;\ |x-y|=1}(q_x-q_y)^2,
\end{equation}
over the Hilbert space
\begin{equation}
\mathcal{H}_\Lambda=\bigotimes_{x\in\Lambda}\mathcal{L}^2(\mathbb{R},dq_x).
\end{equation}
Here $q_x$  is the position operator at site $x\in\Lambda$, i.e., the multiplication operator by $q_x$,  and $p_x=i\partial/\partial q_x$ is the momentum operator on $\mathcal{L}^2(\mathbb{R},dq_x)$. These unbounded operators are self-adjoint on suitable domains and they satisfy the commutation relations
\begin{equation}\label{eq:pq-com}
[q_x,q_y]=[p_x,p_y]=0\quad \text{ and } [q_x,p_y]=i\delta_{x,y}\idty,
\end{equation}
where $\delta_{x,y}=\{1 \text{ if }x=y, \text{ and } 0\text{ otherwise}\}$ is the Kronecker delta function, see e.g., \cite{ReedSimon}. Disorder in the system is added through the spring constants $\{k_x\}_{x\in\Lambda}$, where they are regarded as independent identically distributed  random variables with absolute continuous distribution $\mu$ given by bounded density $\nu$, supported in $[0,k_{\text{max}}] $ with $0<k_{\text{max}}<\infty$, i.e.,
\begin{equation}
d\mu(k_x)=\nu(k_x)dk_x, \text{ with }\|\nu\|_\infty<\infty
 \text{ and } \text{supp }\nu=[0,k_{\max}].
\end{equation}

 In (\ref{H_Lambda}), $|\cdot|$ denotes the $1$-norm distance on $\Lambda$, i.e., the interaction is over all undirected edges ${x,y}$ from $\Lambda$ that correspond to the nearest neighbor sites $x$ and $y$.

It is easy to see that $H_\Lambda$ can be written in the compact form
\begin{equation}
H_\Lambda=-\bigtriangleup+ q^T h_\Lambda q, \quad \bigtriangleup \text{ is the Laplacian}, \ q=(q_x)_{x\in\Lambda},
\end{equation}
where $h_\Lambda$ is the $d$-dimensional finite volume Anderson model defined as
\begin{equation}\label{eq:elements-h-Lambda}
\langle \delta_x, h_\Lambda\delta_y\rangle=
\begin{cases}
2 d +k_x &  \text{if } x=y\\
-1 & \text{if } |x- y|=1\\
0 & \text{elsewhere}
\end{cases}.
\end{equation}
Here $\{\delta_x, x\in\Lambda\}$ is the canonical basis of $\ell^2(\Lambda)$. Our assumption of absolute continuous distribution of the $\{k_x\}_{x\in\Lambda}$ implies that the spectrum of $h_\Lambda$ is almost surely simple, see \cite[Lemma B.1]{ARSS18}. Moreover, it is clear from (\ref{eq:elements-h-Lambda}) that
\begin{equation}\label{eq:h-bound}
\text{spec}(h_\Lambda)
\subseteq\left[\min_{x\in\Lambda}k_x, 4 d + k_{\text{max}}\right],
\end{equation}
meaning that $h_\Lambda$ is almost surely strictly positive with the deterministic bound
\begin{equation}\label{eq:h-bound}
\|h_\Lambda\|\leq \mathcal{C}_h^2:=4 d + k_{\text{max}}, 
\end{equation}
and $h^{-1}_\Lambda$ (hence $h^{-1/2}_\Lambda$) exists almost surely, but $\|h^{-1/2}_\Lambda\|$ is not uniformly bounded in the volume of the system and the disorder. A consequence of this is that the many body Hamiltonian $H_\Lambda$ is not deterministically gapped, i.e., it does not have a ground state gap, see (\ref{eq:H-spec}) below.

It is well known that the analysis of $H_\Lambda$ reduces to the study of the effective one particle Hamiltonian $h_\Lambda$, for which
 we will assume that its \emph{singular eigenfunction correlators} decays exponentially. More precisely, we assume that there exist constants $C<\infty$, $\eta>0$ and $0< s\leq 1$, independent of $\Lambda$, such that
\begin{equation}\label{def:EFC}
\mathbb{E}\left(\sup_{|u|\leq 1}|\langle\delta_x, h^{-1/2}_\Lambda u(h_\Lambda)\delta_y\rangle|^s\right)< C e^{-\eta |x-y|},
\end{equation}
for all $x,y\in\Lambda$, where $\mathbb{E}(\cdot)$ is the disorder average (with respect to the product measure $d\mathbb{P}=\prod_{x\in\Lambda}d\mu$ on $\R^\Lambda$) and $\{\delta_x\}_{x\in\Lambda}$ the canonical basis of $\ell^2(\Lambda)$. The supremum is taken over all functions $u:\R\rightarrow\C$ which satisfy the pointwise bound $|u(x)|\leq 1$.  $u(h_\Lambda)$ is defined by the functional calculus of symmetric matrices.

The eigenfunction correlators in (\ref{def:EFC}) is first introduced in \cite{NSS12}, and it holds in the following cases:
\begin{itemize}
\item[(a)] For $d=1$ and any $\nu$ with $s=1/2$, \cite[Prop A.1(c) and A.4(a)]{NSS13}.
\item[(b)] For $d\geq 1$ and large disorder with $s=1$,  \cite[Prop A.1(b) and A.3(b)]{NSS13}.
\end{itemize}
We will need to deal with less singular eigenfunction correlators than (\ref{def:EFC}), in the meaning that the singular term  $h_\Lambda^{-1/2}$ is replaced by $\idty_\Lambda$ or $h_\Lambda^{1/2}$. The decay rate of such eigenfunction correlators are known in various regimes in works related to the Anderson model, see e.g., \cite{Aizenman, Aizenman-Warzel-15, Stolz}. Starting with the decay bound in (\ref{def:EFC}), one can get similar bounds for these cases,
\begin{eqnarray}\label{def:EFC-identity}
\mathbb{E}\left(\sup_{|u|\leq 1}|\langle\delta_x,  u(h_\Lambda)\delta_y\rangle|^s\right)&=&\mathbb{E}\left(\|h^{1/2}_\Lambda\|^s\sup_{|u|\leq 1}|\langle\delta_x, h^{-1/2}_\Lambda \frac{h^{1/2}_\Lambda}{\|h^{1/2}_\Lambda\|} u(h_\Lambda)\delta_y\rangle|^s\right)\nonumber\\
&\leq& \mathcal{C}_h^s\ \mathbb{E}\left(\sup_{|u|\leq 1}|\langle\delta_x, h^{-1/2}_\Lambda u(h_\Lambda)\delta_y\rangle|^s\right)\nonumber\\
&\leq& \mathcal{C}_h^s C e^{-\eta|x-y|}.
\end{eqnarray}
And similarly, 
\begin{equation}\label{def:EFC-half}
\mathbb{E}\left(\sup_{|u|\leq 1}|\langle\delta_x,  h^{1/2}_\Lambda u(h_\Lambda)\delta_y\rangle|^s\right)\leq \mathcal{C}_h^{2s} C e^{-\eta|x-y|},
\end{equation}
for all $x,y\in\Lambda$.
\begin{remark}
All of our results could be extended to more general disordered oscillator systems than \eqref{H_Lambda}, In particular, our results apply to oscillator systems in general graphs with bounded maximum degrees, and with random masses weighing the kinetic energies $p_x^2$ or random couplings at the interactions $(q_x-q_y)^2$, see e.g.,  \cite{NSS13, BSW19}, as long as localization of the effective one particle Hamiltonian in the form \eqref{def:EFC} can be verified. We limit our discussion to the case of lattices and random spring constants $k_x$ is mostly due to the fact that this can be referenced for the Anderson model (with disordered potential).
\end{remark}
\subsection{$H_\Lambda$ as a free boson system}\label{subsec:Free} $H_\Lambda$ can be written as a disordered free boson system. In the following we present the main steps of this transformation. For more details we refer the reader to e.g.,  \cite{NSS12,NSS13, AR18}.

The real positive (almost surely) operator $h_\Lambda$ can be orthogonally diagonalized 
$
h_\Lambda=\sum_{j}\gamma_j^2|\phi_j\rangle\langle\phi_j|
$
in terms of the eigenvectors $\{\phi_j,\ j=1,\ldots, |\Lambda|\}$  of $h_\Lambda$, and the corresponding positive eigenvalues: $0<\gamma_1^2\leq \gamma_2^2\leq \ldots\leq \gamma_{|\Lambda|}^2$ in non-decreasing order. This diagonalization defines almost surely the operators $\{b_j,\ j=1,\ldots,|\Lambda|\}$ (and their adjoints $\{b_j^*\}_j$)
\begin{equation}\label{eq:pq-to-b}
b_j=\frac{1}{\sqrt{2}}\left(\gamma_j^{1/2} \phi_j^T q+i \gamma_j^{-1/2} \phi_j^T p\right),\ j=1,\ldots,|\Lambda|,\ q=(q_j)_j,\ p=(p_j)_j.
\end{equation}
A direct calculation using the commutation relations (\ref{eq:pq-com}) shows that the $b$ operators and their adjoints satisfy the {\it canonical commutation relations} (CCR)
\begin{equation}\label{CCR}
[b_j,b_k]=[b_j^*,b_k^*]=0 \text{ and } [b_j,b_k^*]=\delta_{j,k}\idty,\quad j=1,\ldots,|\Lambda|
\end{equation}
and leads to the second quantization representation of the harmonic oscillator systems $H_\Lambda$,
\begin{equation}\label{eq:H-FreeBoson}
H_\Lambda=\sum_{j=1}^{|\Lambda|}\gamma_j(2 b_j^* b_j +\idty).
\end{equation}
The CCR (\ref{CCR}) implies the existence of a unique (up to a phase) vacuum $\psi_{0,\Lambda}$ of the $b$'s operators, i.e., $b_j\psi_{0,\Lambda}=0$ for all $j=1,\ldots,|\Lambda|$. This leads to the complete diagonalization of $H_\Lambda$ with the following set of eigenvalues
\begin{equation}\label{eq:H-spec}
\text{spec}(H_\Lambda)=\left\{\sum_{k=1}^{|\Lambda|}\gamma_k(2\alpha_k+1);\ \alpha_k\in\mathbb{N}_0  \text{ for }k=1,\ldots,|\Lambda|\right\},
\end{equation}
showing that the ground state gap is $2\min_k\gamma_k=2\gamma_1$.

\subsection{A quantum quench scenario}\label{subsec:Result} A quite non-trivial class of non-equilibrium processes of quantum many body systems is quantum quenches. We start with  a quantum state (say the ground state) which is prepared with respect to one Hamiltonian $H_0$ at $t=0$ and then we suddenly change the Hamiltonian from $H_0$ to a new one $H$. The original state starts to experience the time evolution for $t>0$. Such process is called a \emph{quantum quench}, see, e.g., \cite{Rangamani17}.

Next we explain how quantum quench is considered for (disordered) oscillator systems: We start with a fixed number of non-interacting oscillator systems, each system is of the form (\ref{H_Lambda}) and it is initially in a  thermal state or in its ground state. This means that our initial state is the product state of all the corresponding  local thermal/ground states. We switch on the interactions via $H_\Lambda$ and we study the entanglement created in the new coupled system, i.e., we study the entanglement of the initial product state after a quantum quench.

More precisely, we consider the $d$-dimensional lattice 
\[
\Lambda=\Lambda^{(L)}:=[-L,L]^d\cap\mathbb{Z}^d
\]
 and we decompose it into $M$ disjoint sub-rectangular regions (we tile $\Lambda$ using $M$ rectangular regions.)
\begin{equation}\label{def:partition1}
\Lambda=\bigcup_{m=1}^M \Lambda_m.
\end{equation}
Here and in the rest of the paper, we slightly abuse notation and use $\Lambda=\Lambda^{(L)}$ to denote the $d$-dimensional lattice.
 
We consider the time evolution of initial suitable states $\varrho$ that corresponds to the decomposition (\ref{def:partition1}). The initial states are product states of any finite number of thermal/ground states of restrictions of the harmonic oscillator Hamiltonian $H_\Lambda$ to subsystems.

In particular, for each $m=1,\ldots,M$ we consider the restriction $H_{\Lambda_m}$ of the harmonic system $H_\Lambda$ to $\Lambda_m$, defined similar to (\ref{H_Lambda}).

For each $m$, let $\varrho_{m,\beta_m}$ be the thermal state of the local model $H_{\Lambda_m}$ with inverse temperature $\beta_m\in (0,\infty]$, where $\beta_m=\infty$ corresponds to the ground state density,
\begin{equation}
\varrho_{m,\beta_m}=\begin{cases}
(\Tr e^{-\beta_m H_{\Lambda_m}})^{-1}\ e^{-\beta_m H_{\Lambda_m}} & \text{ if }0<\beta_m<\infty\\
|\psi_{0,\Lambda_m}\rangle\langle\psi_{0,\Lambda_m}| &\text{ if }\beta_m=\infty
\end{cases}.
\end{equation}
We choose the initial state to be the product state for $\beta=(\beta_1,\ldots,\beta_M)\in(0,\infty]^M$ 
\begin{equation}\label{def:rho0}
\varrho_\beta=\bigotimes_{m=1}^M \varrho_{m,\beta_m},
\end{equation}
where
$(0,\infty]^M:=\{(\beta_1,\ldots,\beta_M);\ \beta_1,\ldots,\beta_M\in(0,\infty]\}$.
Here we remark that $\varrho_{\beta}$ in (\ref{def:rho0}) is stationary, i.e., time invariant, under the Hamiltonian of the non-interacting systems,
\begin{equation}
H_{0,\Lambda}=\sum_{m=1}^M H_{\Lambda_m}\otimes \idty_{\Lambda\setminus\Lambda_m}.
\end{equation}
We suddenly switch on the (spring) interactions between the local Hamiltonians $\{H_{\Lambda_m}\}_m$ to obtain the full system $H_\Lambda$ at $t>0$. 

We study the entanglement of the (Sch\"odinger) time evolution of the initially non-entangled state $\varrho_\beta$ defined in (\ref{def:rho0}).
\begin{equation}\label{eq:time-evolution}
\varrho_{t,\beta}=e^{-i t H_\Lambda}\varrho_\beta e^{i t H_\Lambda}.
\end{equation}
Towards this, we fix a subregion $\Lambda_0\subset \Lambda$
and we consider the bipartition of the Hilbert space
\begin{equation}\label{bipartition}
\mathcal{H}_\Lambda=\mathcal{H}_{1}\otimes\mathcal{H}_{2} \text{\quad with\quad }\mathcal{H}_{1}=\bigotimes_{x\in\Lambda_0}\mathcal{H}_x, \ \mathcal{H}_2=\bigotimes_{x\in\Lambda\setminus\Lambda_0}\mathcal{H}_x. 
\end{equation}

In the case when all the local subsystems are cooled down to their ground states, the initial state (\ref{def:rho0}) and its time evolution (\ref{eq:time-evolution}) are pure states for which the (von Neumann) entanglement entropy is the best tool to quantify entanglement. In any other case ($\beta_m<\infty$ for some $m$), our initial states (\ref{def:rho0}) are mixed states, for which the {\it logarithmic negativity} is considered as a suitable entanglement measure \cite{Plenio2005}: It is well known that the logarithmic negativity is an upper bound to the distillable entanglement \cite{PlenioEtal}, and for pure states, it is an upper bound to the entanglement entropy e.g., \cite{VidalWerner, NSS13}. Therefore,  we use the logarithmic negativity to bound the entanglement dynamics. By $\varrho^{T_1}$ we denote the partial transpose with respect the first component in (\ref{bipartition}) of a state $\varrho$ in $\mathcal{H}_{\Lambda}$, see the Appendix of \cite{NSS13} for a detailed discussion of partial transposes. The logarithmic negativity $\mathcal{N}(\varrho)$ of $\varrho$ is the logarithm of the trace norm of $\varrho^{T_1}$, 
\begin{equation}\label{def:LN}
\mathcal{N}(\varrho)=\log\|\varrho^{T_1}\|_1.
\end{equation}
Here $\|\cdot\|_1$ is the trace norm, i.e., $\|A\|_1=\Tr|A|$. In (\ref{def:LN}), 
we choose $\log$ to denote the natural logarithm, as opposed to the logarithm base 2 ($\log_2$) used in this context in the information theory literature. The distinction is irrelevant for our work, as we will not keep track of universal constants.

In our analysis we do not require the decomposition $\Lambda_1, \dots,\Lambda_M$ to be compatible with the bipartition of the system into $\Lambda_0$ and $\Lambda\setminus\Lambda_0$. However, if $\Lambda_0$ is chosen to be a union of  sub-regions $\Lambda_m$, then the initial state is a product state, i.e., not entangled with respect to the bipartition $\mathcal{H}_1\otimes \mathcal{H}_2$,
\begin{equation}
\mathcal{N}(\varrho_{t=0,\beta})=\mathcal{N}(\varrho_\beta)=0.
\end{equation}
For positive time $t>0$, $\varrho_{t,\beta}$  defined in (\ref{eq:time-evolution}) is entangled (not seperable), and hence, its logarithmic negativity is strictly positive.

\subsection{Results}
Our main result provides an area law for the Schr\"{o}dinger time evolution of $\varrho_{\beta}$ under to the full system $H_{\Lambda}$, with respect to the bipartition (\ref{bipartition}).  This means that the entanglement the time-evolved  state $\rho_{t,\beta}$ scales at most like the surface area $|\partial\Lambda_0|$ of $\Lambda_0$ at all times and for any choice of the initial local inverse temperatures. Here $\partial\Lambda_0$ 
\begin{equation}
\partial\Lambda_0=\{x\in\Lambda_0;\ \exists y\in\Lambda\setminus\Lambda_0 \text{ with } |x-y|=1\}
\end{equation}
denotes the boundary of $\Lambda_0$. 

While our entanglement bound does not depend explicitly on the number of subregions $M$ in (\ref{def:partition1}), it depends on the geometry of the decomposition  in the meaning of how the subregions are ``distributed'' within the lattice $\Lambda$. In particular, we show an area law with a pre-factor that is proportional to the maximum degree of the \emph{dual graph} of the tiling (\ref{def:partition1}). More precisely, 
we understand the decomposition (\ref{def:partition1}) as a graph 
 \begin{equation}\label{def:Graph}
 \mathcal{G}_M=(\mathcal{V}_M,\mathcal{E}_M),
 \end{equation}
  with the set of vertices $\mathcal{V}_M$ consisting of the subregions $\{\Lambda_m\}$ and $\mathcal{E}_M$ is the set of undirected edges between neighboring subregions (vertices in $\mathcal{V}_M$), i.e.,
\begin{equation}\label{def:Graph-VE}
\mathcal{V}_M=\{\Lambda_m;\ m=1,\ldots,M\} \text{  and } 
 \mathcal{E}_M=\left\{\{\Lambda_j,\Lambda_k\};\ d_\Lambda(\Lambda_j,\Lambda_k)=1\right\},
\end{equation}
where $d_\Lambda(\Lambda_j,\Lambda_k)$ is the lattice distance between the two subsets, $\Lambda_j$ and $\Lambda_k$, of $\Lambda$, i.e.,
\begin{equation}
 d_\Lambda( \Lambda_j,\Lambda_k)=\min\{|x-y|;\ x\in\Lambda_j,y\in\Lambda_k\}.
 \end{equation}
 The graph $\mathcal{G}_M=(\mathcal{V}_M,\mathcal{E}_M),$ defined in (\ref{def:Graph}) and (\ref{def:Graph-VE}) is referred to as the dual graph of the decomposition of $\Lambda$ into $M$ disjoint subregion (\ref{def:partition1}).

Our entanglement bound depends on the \emph{maximum degree} $\Delta(\mathcal{G}_M)$ of the dual graph $\mathcal{G}_M$,
\begin{equation}\label{def:R-2}
 \Delta(\mathcal{G}_M)=\max_{j=1,\ldots,M}\left|\left\{\Lambda_k;\ \{\Lambda_j,\Lambda_k\}\in\mathcal{E}_M \right\}\right|\leq M-1.
\end{equation}
 
With $\mathbb{E}(\cdot)$ denotes the disorder average, we will prove the following theorem about the logarithmic negativity for $\varrho_{t,\beta}$.
\begin{thm}\label{thm:main-result}
Assume that the effective one particle Hamiltonian $h_\Lambda$ satisfies (\ref{def:EFC}). 
Consider the initial state $\varrho_\beta$ defined in (\ref{def:rho0}) with respect to the decomposition $(\ref{def:partition1})$, and its Schr\"{o}dinger evolution, $\varrho_{t,\beta}=e^{-itH_{\Lambda}}\varrho_\beta e^{itH_\Lambda}$ under the full harmonic oscillator systems $H_{\Lambda}$.

Then there exists $C'<\infty$ independent of $\Lambda$ and $M$ such that
\begin{equation}\label{eq:result}
\mathbb{E}\left(\sup_{t\in\R,\ \beta\in(0,\infty]^M}\mathcal{N}(\varrho_{t,\beta})\right)\leq C' (1+\Delta(\mathcal{G}_M)^{s/4})\ |\partial \Lambda_0|
\end{equation}
for all $1\leq m\leq |\Lambda|$ and the choices of the sub-boxes $\{\Lambda_m\}_m$. In (\ref{eq:result}), $\Delta(\mathcal{G}_M)$ is defined in (\ref{def:R-2}), and $s\in(0,1]$ is the constant in the eigenfunction correlators (\ref{def:EFC}). The supremum is taken over all times $t\in \mathbb{R}$ and all thermal/ground states $\varrho_{m,\beta_m}$ of $H_{\Lambda_m}$, $m=1,\ldots,M$.
\end{thm}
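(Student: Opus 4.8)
The plan is to exploit the \emph{quasi-free} (Gaussian) character of the problem at every stage, reducing the computation of $\mathcal{N}(\varrho_{t,\beta})$ to a finite-dimensional spectral quantity built from the effective one-particle Hamiltonian $h_\Lambda$, and then to control that quantity uniformly in $t$ and $\beta$ by the eigenfunction-correlator decay (\ref{def:EFC})--(\ref{def:EFC-half}). First I would observe that each local $\varrho_{m,\beta_m}$ is a Gaussian state of the quadratic Hamiltonian $H_{\Lambda_m}$, so that $\varrho_\beta$ is Gaussian with a covariance matrix that is block-diagonal over the tiling $\{\Lambda_m\}$; its nonzero position--position and momentum--momentum blocks are $\tfrac12 h_{\Lambda_m}^{-1/2}\coth(\beta_m\sqrt{h_{\Lambda_m}})$ and $\tfrac12 h_{\Lambda_m}^{1/2}\coth(\beta_m\sqrt{h_{\Lambda_m}})$, with $\coth\equiv 1$ when $\beta_m=\infty$. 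Since $H_\Lambda$ is quadratic, $e^{-itH_\Lambda}$ implements a symplectic transformation: solving the Heisenberg equations $\dot q=2p$, $\dot p=-2h_\Lambda q$ gives $q(t)=\cos(2t\sqrt{h_\Lambda})\,q+h_\Lambda^{-1/2}\sin(2t\sqrt{h_\Lambda})\,p$ and $p(t)=-h_\Lambda^{1/2}\sin(2t\sqrt{h_\Lambda})\,q+\cos(2t\sqrt{h_\Lambda})\,p$. Thus $\varrho_{t,\beta}$ is Gaussian, and its covariance matrix $\Gamma_{t,\beta}$ is the block-diagonal initial covariance conjugated by these propagators, whose building blocks $\cos(2t\sqrt{h_\Lambda})$ and $h_\Lambda^{\pm1/2}\sin(2t\sqrt{h_\Lambda})$ are precisely of the type $u(h_\Lambda)$ and $h_\Lambda^{\pm1/2}u(h_\Lambda)$ with $\|u\|_\infty\le1$ controlled by (\ref{def:EFC})--(\ref{def:EFC-half}).

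Next I would encode $\varrho_{t,\beta}$ through its characteristic function on the Weyl operators, for which the partial transpose with respect to $\Lambda_0$ acts simply as the partial time reversal $p_x\mapsto-p_x$, $x\in\Lambda_0$. Following Vidal--Werner \cite{VidalWerner} (as adapted in \cite{NSS13}) this yields the exact formula $\mathcal{N}(\varrho_{t,\beta})=-\sum_{\nu_k<1}\log\nu_k$ (in a suitable normalization), where $\nu_k$ are the symplectic eigenvalues of the partially transposed covariance matrix $P\Gamma_{t,\beta}P$ and $P$ flips the momenta in $\Lambda_0$; this is the content of the exact formula \ref{thm:LN}. Since the symplectic eigenvalues of $\Gamma_{t,\beta}$ itself all lie at or above the vacuum value, and since $-\log x\le\tfrac1s(x^{-s}-1)$, I would then bound the negativity by a Schatten-$s$ quasi-norm of the matrix measuring the effect of the partial transpose --- essentially the cross-boundary part of $\Gamma_{t,\beta}$, supported on pairs $(x,y)$ with $x\in\Lambda_0$, $y\in\Lambda\setminus\Lambda_0$ (Lemma \ref{lem:LN:upperbound}).

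The main obstacle lives in this reduction. Through the initial covariance, each cross-boundary matrix element is dressed by the block-local factors $\coth(\beta_m\sqrt{h_{\Lambda_m}})$, which are \emph{not} uniformly bounded in the volume nor in $\beta$: they diverge as $\beta_m\to 0$ and as the spectrum of $h_{\Lambda_m}$ approaches $0$. A naive estimate therefore fails, and one cannot simply majorize $\coth$ by a constant. The point to establish is that these divergent factors enter the symplectic/Schatten combination only through balanced products of the shape $h^{-1/2}u(h)$ and $h^{1/2}u(h)$, in which the $\coth$ appears as a conjugation that can be measured against the vacuum value, so that only the less singular correlators (\ref{def:EFC-identity})--(\ref{def:EFC-half}) survive. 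Making the bound uniform in $\beta$ \emph{including} the infinite-temperature regime --- where the large thermal fluctuations must be shown to weaken rather than enhance the entanglement --- is the crux of the argument and the step I expect to require the most care.

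Finally, in Lemma \ref{lem:bound} I would bound the Schatten-$s$ quasi-norm by the $\ell^s$ sum of its matrix elements (valid for $0<s\le1$), pull the disorder average inside using subadditivity of $t\mapsto t^s$, and estimate each averaged element by (\ref{def:EFC})--(\ref{def:EFC-half}) to get $\E|(\cdot)_{xy}|^s\lesssim e^{-\eta|x-y|}$. Summing the resulting geometric series over $y\in\Lambda\setminus\Lambda_0$ and then over $x\in\partial\Lambda_0$ produces the area factor $|\partial\Lambda_0|$. The dependence on the dual graph $\mathcal{G}_M$ appears because the cross-boundary element at a given boundary site is routed through each subregion $\Lambda_m$ adjacent to the block containing that site; distributing the block-local $\coth$ factors over these adjacent subregions and applying H\"older/Cauchy--Schwarz inside the quasi-norm costs a factor growing with the number of neighbors, i.e. with $\Delta(\mathcal{G}_M)$, while the square-root structure of both the symplectic eigenvalues and of $h_\Lambda^{1/2}$ converts this into the stated power $\Delta(\mathcal{G}_M)^{s/4}$, yielding (\ref{eq:result}).
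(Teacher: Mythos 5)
Your overall architecture matches the paper's: establish that $\varrho_{t,\beta}$ is Gaussian with covariance obtained by conjugating the block-diagonal initial covariance with the symplectic propagator built from $\cos(2t\sqrt{h_\Lambda})$ and $h_\Lambda^{\pm1/2}\sin(2t\sqrt{h_\Lambda})$, derive an exact Vidal--Werner-type formula for the logarithmic negativity from the partially transposed covariance, pass to a Schatten quasi-norm of the cross-boundary part, bound matrix elements by the eigenfunction correlators, and sum the exponentials to get $|\partial\Lambda_0|$. All of that is sound and is essentially Sections 3 and 4 of the paper.

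However, there is a genuine gap at exactly the step you flag as ``the crux'': you never supply the mechanism that tames the unbounded local factors $\bigoplus_m h_{\Lambda_m}^{-1/2}\tanh(\beta_m h_{\Lambda_m}^{1/2})$ uniformly in $\beta$, and your heuristic for where $\Delta(\mathcal{G}_M)^{s/4}$ comes from is not correct. The paper's key ingredient is the deterministic operator-norm bound $\bigl\|\bigl(\bigoplus_m h_{\Lambda_m}^{-1/2}\tanh(\beta_m h_{\Lambda_m}^{1/2})\bigr)h_\Lambda^{1/2}\bigr\|\le\sqrt{\Delta(\mathcal{G}_M)+1}$ (Lemma \ref{lem:bound}). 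It is proved by viewing $Z=(\oplus_m h_{\Lambda_m}^{-1/2})\,h_\Lambda\,(\oplus_m h_{\Lambda_m}^{-1/2})$ as an $M\times M$ block matrix over the tiling whose diagonal blocks are $\idty_{\Lambda_m}$ and whose off-diagonal blocks are nonzero only on edges of $\mathcal{G}_M$; positivity of $Z$ plus a Schur-complement argument forces each off-diagonal block to have norm at most $1$, and a Schur test counting at most $\Delta(\mathcal{G}_M)$ nonzero blocks per block-row gives $\|Z-\idty\|\le\Delta(\mathcal{G}_M)$. The exponent $s/4$ then falls out because this norm enters to the power $\alpha$ with $\alpha=s/2$ (the choice $2\alpha\le s$ being forced by a Cauchy--Schwarz/Jensen step in the disorder average of the $(\cdot)_{2,1}$ and $(\cdot)_{2,2}$ blocks), so $(\Delta+1)^{\alpha/2}=(\Delta+1)^{s/4}$. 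Your proposed mechanism --- ``routing'' cross-boundary elements through adjacent subregions and letting H\"older cost a factor per neighbor --- does not produce an operator bound of this kind and would not obviously yield the square root of the degree. Two smaller points: the paper works with the \emph{inverse} covariance matrix, so the dangerous factor is $\tanh(\beta_m h_{\Lambda_m}^{1/2})h_{\Lambda_m}^{-1/2}$, which diverges in the ground-state limit $\beta_m\to\infty$ combined with small spectrum of $h_{\Lambda_m}$ (not as $\beta_m\to0$ as you state for the $\coth$ factors); and Lemma \ref{lem:bound} in the paper is precisely this graph-degree bound, not the elementary Schatten-to-$\ell^s$ elementwise estimate you attribute to it.
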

The constant $C'$ in Theorem \ref{thm:main-result}  depends on the dimension $d$ of the lattice $\Lambda$, $k_{\max}$, and on the constants in the eigenfunction correlators (\ref{def:EFC}): $s\in(0,1]$, $C$, and $\eta$. In particular $C'$ can be chosen as
\begin{equation}\label{def:C'}
C'=\frac{288}{s} \mathcal{C}_h^s C\left(\sum_{x\in\mathbb{Z}^d}e^{-\frac{1}{4} \eta|x|}\right)^3,
\end{equation}
where $\mathcal{C}_h$ is the $h_\Lambda^{1/2}$ norm-bound (\ref{eq:h-bound}). 


The area law (\ref{eq:result}), giving an upper bound for the dynamical evolution of entanglement proportional to the surface area of the subsystem $\Lambda_0$, is uniform not only in time $t\in\R$, the size of the system $\Lambda$, and the subsystem $\Lambda_0$, but it also applies uniformly to all possible products of thermal/ground states of $H_{\Lambda_m}$ for $m=1,\ldots,M$ irrespective of their corresponding temperatures. This result adds to the many other MBL indicators shown for the disordered quantum harmonic oscillator models that satisfy the localization assumption (\ref{def:EFC}), \cite{NSS12,NSS13,ARSS17,AR18,ARSS18,BSW19}.

In the following we comment on the maximum degree $\Delta(\mathcal{G}_M)$ in the entanglement bound (\ref{eq:result})
\begin{itemize}
\item $\Delta(\mathcal{G}_M)$ is a genuine high dimensional object: While there is essentially one dual graph (a chain) associated with the  decomposition of a one-dimensional chain into $M$ sub-intervals, and hence $\Delta(\mathcal{G}_M)\leq 2$, the situation in higher dimensions is  totally different. For example, one can tile a 2-dimensional rectangular region using $M$ sub-rectangles in many  ways that correspond to different dual graph's maximum degrees $\Delta(\mathcal{G}_M)$. In particular, horizontal (or vertical) slicing of $\Lambda$ corresponds to $\Delta(\mathcal{G}_M)=2$, and any other tiling (using rectangles) corresponds to higher graph's maximum degree, see the special  case discussed in comment (iii) below, where $\Delta(\mathcal{G}_M)$ is the surface area of $\Lambda_0$.

\item One crucial case to discuss is the behavior of the dynamical entanglement bound in the thermodynamic limit. This leads to the interesting case when $\Delta(\mathcal{G}_M)$ does not grow as $M\rightarrow\infty$, this is happening for example when the dual graph is translation invariant. In this case, one may discuss the thermodynamic limit with a growing (without bound) number of initial fixed size subsystems. Of course, the thermodynamic limit $\Lambda\rightarrow\mathbb{Z}^d$ (i.e., $L\rightarrow\infty$) can also be understood with a fixed number of decomposition $M$, but we have to be careful on how to control the growing subregions.

\item Whether the entanglement bound (\ref{eq:result}) is sharp or not, is an open interesting question. Our bound is only an upper bound for the dynamical entanglement. In particular,  it  is pressing to investigate more the dependency of the entanglement upper bound on $\Delta(\mathcal{G}_M)$, for example, by proving an entanglement lower bound that grows with $\Delta(\mathcal{G}_M)$. In this case, our bound would reveal an important insight about the dynamical evolution of entanglement in high dimensions: Dynamical entanglement depends on the way the subregions are tiling (or covering) the lattice. That this theoretical insight may be used to maximize (or tune up) entanglement in two or more dimensional quantum models is a question that needs more theoretical  and numerical investigations. 

\end{itemize}

We comment on some special cases in the following remarks:
\begin{itemize}\itemsep0.2cm
\item[(i)] In the extreme case where $M=1$ (meaning that $\Delta(\mathcal{G}_1)=0$), our main result reduces to an area law for the  entanglement of (equilibrium) thermal states and the ground state of $H_\Lambda$ in (\ref{H_Lambda}), making the area law results in \cite{NSS13,BSW19} a special case  of Theorem \ref{thm:main-result}.

\item[(ii)] In the other extreme case where each subsystem consists of only one site, i.e., $M=|\Lambda|$, the initial Hamiltonian $H_{0,\Lambda}$ is a system of non-coupled  oscillators over the $d$-dimensional lattice $\Lambda$,
\begin{equation}
H_{0,\Lambda}=\sum_{x\in\Lambda} H_{\{x\}}\otimes\idty_{\Lambda\setminus \{x\}},
\end{equation}
where $H_{\{x\}}$ is the Hamiltonian of a single quantum harmonic oscillator
\begin{equation}\label{def:Hx}
H_{\{x\}}=p_x^2+k_x q_x^2.
\end{equation}
In this case, the scenario is as follows: We consider $|\Lambda|$ quantum oscillators  $H_{\{x\}}$ placed on the lattice $\Lambda$ and prepared at a corresponding thermal state $\varrho_{\beta_x}$, or cooled down to the ground state ($\beta_x=\infty$). Then we couple all next neighbor local oscillators by springs suddenly, and study the entanglement with respect to the bipartition (\ref{bipartition}).  Note that here the dual graph associated with this decomposition is the whole lattice with next neighbors interactions, i.e., $\mathcal{G}_{|\Lambda|}=(\mathcal{V}_{|\Lambda|},\mathcal{E}_{|\Lambda|})$ with vertices $\mathcal{V}_{|\Lambda|}=\Lambda$ and edges $\mathcal{E}_{|\Lambda|}=\big\{\{x,y\};\ |x-y|=1\big\}$, and observe that in this case $\Delta(\mathcal{G}_{|\Lambda|})=2d$. Hence, Theorem \ref{thm:main-result} gives directly the following corollary.
\begin{cor}\label{cor:1}
Assume that $h_\Lambda$ satisfies (\ref{def:EFC}). Consider the thermal (or ground) state $\varrho_{\beta_x}$  of the quantum  harmonic oscillator $H_{\{x\}}$ in (\ref{def:Hx}) for each $x\in\Lambda$. Then
\begin{equation}
\mathbb{E}\left(\sup_{t\in\R,\ \beta=(\beta_x)\in(0,\infty]^{|\Lambda|}}\mathcal{N}\big(e^{-itH_\Lambda}\bigotimes_{x\in\Lambda}\varrho_{\beta_x}e^{itH_\Lambda}\big)\right)\leq C' (1+(2d)^{s/4})\ |\partial \Lambda_0|
\end{equation}
where $C'$ is given in $(\ref{def:C'})$.
\end{cor}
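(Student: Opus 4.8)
The plan is to obtain the corollary directly from Theorem \ref{thm:main-result} by specializing the tiling (\ref{def:partition1}) to the finest possible decomposition and then computing the maximum degree of the associated dual graph. First I would set $M = |\Lambda|$, so that each subregion $\Lambda_m$ is a single lattice site $\{x\}$ and the local Hamiltonian $H_{\Lambda_m}$ collapses to the single-oscillator Hamiltonian $H_{\{x\}} = p_x^2 + k_x q_x^2$ of (\ref{def:Hx}). Under this identification the initial product state (\ref{def:rho0}) is precisely $\bigotimes_{x \in \Lambda} \varrho_{\beta_x}$, so the quantity $\mathcal{N}\big(e^{-itH_\Lambda}\bigotimes_{x\in\Lambda}\varrho_{\beta_x}e^{itH_\Lambda}\big)$ appearing in the corollary is literally $\mathcal{N}(\varrho_{t,\beta})$ for this tiling, and the hypothesis (\ref{def:EFC}) is unchanged. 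Hence Theorem \ref{thm:main-result} applies verbatim and delivers the bound (\ref{eq:result}) with the same constant $C'$ from (\ref{def:C'}); it only remains to evaluate $\Delta(\mathcal{G}_{|\Lambda|})$ for this particular choice.

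The one substantive step is therefore the geometric computation of the maximum degree. Since the vertices of the dual graph are now the singletons $\{x\}$ and $d_\Lambda(\{x\},\{y\}) = |x - y|$, the edge set $\mathcal{E}_{|\Lambda|}$ consists exactly of the pairs $\{x, y\}$ with $|x - y| = 1$; in other words $\mathcal{G}_{|\Lambda|}$ coincides with the nearest-neighbor lattice graph on $\Lambda$ itself, matching the description $\mathcal{E}_{|\Lambda|}=\{\{x,y\};\ |x-y|=1\}$ in comment (ii). The degree of a vertex $x$ is then the number of its lattice neighbors contained in $\Lambda$, which equals $2d$ at every interior site and is strictly smaller at boundary sites. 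Because $\Lambda = [-L,L]^d \cap \mathbb{Z}^d$ possesses interior sites as soon as $L \geq 1$, this maximum is attained, giving $\Delta(\mathcal{G}_{|\Lambda|}) = 2d$.

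Substituting $\Delta(\mathcal{G}_{|\Lambda|}) = 2d$ into (\ref{eq:result}) produces the asserted estimate $C'(1 + (2d)^{s/4})\,|\partial\Lambda_0|$, completing the argument. I do not expect any genuine obstacle at this stage: all of the analytic difficulty — the construction of the correlation matrices, the exact logarithmic-negativity formula, and the Schatten-quasi-norm bound tied to the dual-graph geometry — is already absorbed into Theorem \ref{thm:main-result}, so the corollary is a pure specialization whose only non-bookkeeping content is the elementary observation that the nearest-neighbor graph on $\mathbb{Z}^d$ has maximum degree $2d$.
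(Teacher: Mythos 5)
Your proposal is correct and follows exactly the paper's own route: Corollary \ref{cor:1} is obtained in the text as a direct specialization of Theorem \ref{thm:main-result} to the decomposition into singletons, with the observation that the dual graph is the nearest-neighbor lattice graph on $\Lambda$ and hence $\Delta(\mathcal{G}_{|\Lambda|})=2d$. Your geometric computation of the maximum degree matches the paper's comment (ii) verbatim in substance.
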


 It is noteworthy that numerics in \cite{PlenioEtal} suggest that for the counterpart one dimensional gapped model, dynamical entanglement in the same situation addressed in Corollary \ref{cor:1} (only local ground states are considered in \cite{PlenioEtal}.) is generated over very large distances, indicating that the entanglement might not be following an area law. If this is the case, then Corollary \ref{cor:1} shows that disordered un-gapped harmonic oscillator systems generate weaker (dynamical) entanglement than their gapped counterparts.

\item[(iii)] It is also remarkable, that one can carefully construct decompositions (\ref{def:partition1}) for which the maximum degree of the associated graph $\Delta(\mathcal{G}_M)$, that appears in the entanglement upper bound (\ref{eq:result}), is equal to $|\partial\Lambda_0|$. This can be done for example if $\Lambda_0$ is chosen such that $\Lambda_0=\Lambda_1$ is a large enough   sub-rectangular region (so that $|\partial\Lambda_0|\gg2d$) away from the boundaries of the whole lattice $\Lambda$ (within $\mathbb{Z}^d$), and all other $\Lambda_j$'s are just of cardinality one, i.e., $M=|\Lambda|-|\Lambda_0|+1$. In this case, 
\begin{equation}\label{ex:G}
\Delta(\mathcal{G}_M)=\max_{j=1,\ldots,M}\left|\{\Lambda_k;\ d_\Lambda (\Lambda_j,\Lambda_k)=1\}\right|=
\left|\{\Lambda_k;\ d_\Lambda (\Lambda_0,\Lambda_k)=1\}\right|=|\partial\Lambda_0|,
\end{equation}
and hence, we've crafted a special decomposition for which the entanglement upper bound is $\mathcal{O}(|\partial\Lambda_0|^{1+\frac{s}{4}})$, this follows from (\ref{eq:result}), which also means that
\begin{equation}\label{}
\mathbb{E}\left(\sup_{t\in\R,\ \beta\in(0,\infty]^M}(\mathcal{N}(\varrho_{t,\beta}))^{\frac{4}{s+4}}\right)\leq \tilde{C} \ |\partial \Lambda_0|,
\end{equation}
where we used Jensen's inequality $\mathbb{E}(\cdot^q)\leq \mathbb{E}(\cdot)^q$ for any $q\in(0,1]$. Without ruling out the possibility that for this special case the bound is not optimal, we observe that
this is an area law if we accept that $(\mathcal{N}(\varrho_{t,\beta}))^{\frac{4}{s+4}}$, where $0<s\leq1$ is an entanglement measure as well. We think that any $q$-power, for $q\in(0,1]$,  of the logarithmic negativity $(\mathcal{N} (\varrho))^q$ is an entanglement quantifier for the following reasons: it
vanishes when $\varrho$ is separable, and more importantly, it is \emph{entanglement monotone} under general \emph{positive partial transpose} preserving (known as PPT) operations (it does not increase on average  under general PPT operations). The latter follows directly from the proof of the case $q=1$ presented in \cite{Plenio2005} by replacing the logarithm function by $q$-power of the logarithm in the argument after equations (7) in \cite{Plenio2005}, noting that $(\log(x))^q$ is concave and monotone increasing for $0<q\leq 1$.
\end{itemize}


\section{Logarithmic Negativity}\label{sec:LN}
In this section, we derive an exact formula for the logarithmic negativity of $\rho_{t,\beta}$, see Theorem \ref{thm:main-result}, then we find a practical entanglement upper bound, Lemma \ref{eq:LN:upperbound-0}. 

\subsection{Weyl operators expectations and correlation matrices}\label{subsec:LN:Weyl}

It is well known that the thermal and ground states of the free boson systems are gaussian states (quasi-free), for which exact entanglement formulas are derived with the aid of the Weyl operator expectations, often called the \emph{(quantum) characteristic functions}, see e.g., \cite{VidalWerner, NSS13, AudenaertEtal2002, EisertEtal2010}.

In the following, we define the Weyl operator and find its expectation at the time-evolved state $\varrho_{t,\beta}$ defined in (\ref{eq:time-evolution}). Along the way, we show that the initial state $\varrho_\beta$ and its time evolution $\varrho_{t,\beta}$ are both gaussian states. In general, the product and the time evolution of gaussian states are gaussian. 

Towards writing the explicit formula for the logarithmic negativity of $\varrho_{t,\beta}$, we start by defining the Weyl operators (also known as displacement operators). For every $z\in\mathbb{C}$  the corresponding Weyl operator $\W_z$ is defined as the unitary operator
\begin{equation}
\W_z=\exp\left(i(\RE[z] q_x+ \IM[z] p_x)\right)
\end{equation}
Then for every $f=(f_x)_{x\in\Lambda}\in\ell^2(\Lambda)$, the Weyl operator is defined as 
\begin{equation}\label{def:Weyl}
\W(f)=\bigotimes_{x\in\Lambda}\W_{f_x}=\exp\left(i\sum_{x\in\Lambda}(\RE[f_x]q_x+\IM[f_x]p_x)\right).
\end{equation}
Let us identify $\ell^2(\Lambda;\C)$ with $\ell^2(\Lambda;\R)\oplus\ell^2(\Lambda,\R)$, i.e.,
\begin{equation}
f\in\ell^2(\Lambda;\C)\sim \tilde{f}=\left(\begin{array}{c}\RE f\\ \IM f\end{array}\right)\in \ell^2(\Lambda;\R)\oplus\ell^2(\Lambda,\R)=:\ell^2(\Lambda,\R)^{\oplus 2}.
\end{equation}
Explicit calculations show that for any gaussian state (quasi-free) $\varrho$ on $\mathcal{H}_\Lambda$ and any $f\in\ell^2(\Lambda)$, the Weyl operator expectations is given as
\begin{equation}\label{def:gaussian}
\langle\W(f)\rangle_{\varrho}=\exp\left(-\frac{1}{4}\langle\tilde{f},\Gamma_{\varrho}\tilde{f}\rangle\right), \text{ where } \Gamma_\varrho=
\begin{pmatrix}
\langle q q^T \rangle_{\varrho} & \langle q p^T \rangle_{\varrho}\\
\langle p q^T \rangle_{\varrho} & \langle p p^T \rangle_{\varrho}
\end{pmatrix}.
\end{equation}

Here $\Gamma_\varrho$ is the $2|\Lambda|\times 2|\Lambda|$ (hermitian non-negative) position-momentum correlation matrix. For a definition of general gaussian states on the CCR algebra, we refer the reader to, e.g., \cite{Bratteli-Robinson, quasi-free2, quasi-free3, NSS13, BSW19}.

In the case when $\varrho$ is the product state $\varrho_\beta$ in (\ref{def:rho0}), we obtain
\begin{eqnarray}\label{Weyl:first}
\langle\W(f)\rangle_{\varrho_\beta} = \prod_{m=1}^M \langle\W\left(\tilde{f}|_{\Lambda_m}\right)\rangle_{\varrho_{m,\beta_m}}
&=&
\prod_{m=1}^M \exp\left(-\frac{1}{4}\left\langle \tilde{f}|_{\Lambda_m},\Gamma_{\varrho_{m,\beta_m}} \tilde{f}|_{\Lambda_m}\right\rangle\right) \nonumber\\
&=&
\exp\left(-\frac{1}{4}\left\langle\oplus_{m} \tilde{f}|_{\Lambda_m},\bigoplus_m\Gamma_{\varrho_{m,\beta_m}}\left(\oplus_m \tilde{f}|_{\Lambda_m}\right)\right\rangle\right) 
\end{eqnarray}
where we used the fact that $\varrho_{m,\beta_m}$ is a gaussian state, and hence its Weyl operator expectation is characterized by  the correlation matrix $\Gamma_{\varrho_{m,\beta_m}}$ as in (\ref{def:gaussian}).
\begin{equation}
\Gamma_{\varrho_{m,\beta_m}}=
\begin{pmatrix}\
\left\langle (q)_{\Lambda_m} (q)_{\Lambda_m}^T \right\rangle_{\varrho_{m,\beta_m}} & \left\langle (q)_{\Lambda_m} (p)_{\Lambda_m}^T \right\rangle_{\varrho_{m,\beta_m}}\\
\left\langle (p)_{\Lambda_m} (q)_{\Lambda_m}^T \right\rangle_{\varrho_{m,\beta_m}} & \left\langle (p)_{\Lambda_m} (p)_{\Lambda_m}^T \right\rangle_{\varrho_{m,\beta_m}}
\end{pmatrix},
\end{equation}
where here and in the following we use the notation $(\cdot)_{\Lambda_m}$ to denote the restriction to $\ell^2(\Lambda_m)$. 
In (\ref{Weyl:first}), $\tilde{f}|_{\Lambda_m}$ denotes the restriction to $\ell^2(\Lambda_m;\R)^{\oplus 2}$, and the direct sums in (\ref{Weyl:first}) correspond to $\bigoplus_m \ell^2(\Lambda_m;\R)^{\oplus 2}$.

A simple reorder of  basis in (\ref{Weyl:first}) maps $\oplus_m\tilde{f}|_{\Lambda_m}$ to $ \tilde{f}$ and gives
\begin{equation}\label{eq:corr:product}
\langle\W(f)\rangle_{\varrho_\beta} =
\exp\left(-\frac{1}{4}\left\langle\tilde{f},\Gamma_{\varrho_\beta}\tilde{f}\right\rangle\right) 
\end{equation}
where
\begin{equation}\label{def:corr:rho-beta}
\Gamma_{\varrho_\beta}=
\begin{pmatrix}\
\bigoplus_{m}\left\langle (q)_{\Lambda_m} (q)_{\Lambda_m}^T \right\rangle_{\varrho_{m,\beta_m}} & \bigoplus_{m}\left\langle (q)_{\Lambda_m} (p)_{\Lambda_m}^T \right\rangle_{\varrho_{m,\beta_m}}\\
\bigoplus_{m}\left\langle (p)_{\Lambda_m} (q)_{\Lambda_m}^T \right\rangle_{\varrho_{m,\beta_m}} & \bigoplus_{m}\left\langle (p)_{\Lambda_m} (p)_{\Lambda_m}^T \right\rangle_{\varrho_{m,\beta_m}}
\end{pmatrix}.
\end{equation}
A direct inspection shows that $\Gamma_{\varrho_\beta}$ in (\ref{def:corr:rho-beta}) is the correlation matrix of the product state $\varrho_\beta=\bigotimes_m\varrho_{m,\beta_m}$. So (\ref{eq:corr:product}) shows that $\varrho_\beta$ (or generally, the tensor product of gaussian states) is a gaussian state.
In our case we have, see e.g., \cite{NSS12, Bratteli-Robinson} or \cite[Chap. XII.12]{Messiah1999}, defined almost surely,
\begin{eqnarray}\label{eq:corr:values}
\left\langle (q)_{\Lambda_m} (q)_{\Lambda_m}^T \right\rangle_{\varrho_{m,\beta_m}}&=&\coth(\beta_m h_{\Lambda_m}^{1/2})h_{\Lambda_m}^{-1/2}, 
\\
\left\langle (p)_{\Lambda_m} (p)_{\Lambda_m}^T \right\rangle_{\varrho_{m,\beta_m}}&=& \coth(\beta_m h_{\Lambda_m}^{1/2}) h_{\Lambda_m}^{1/2}, \nonumber
\\
\left\langle (q)_{\Lambda_m} (p)_{\Lambda_m}^T \right\rangle_{\varrho_{m,\beta_m}}&=&\left\langle (p)_{\Lambda_m} (q)_{\Lambda_m}^T \right\rangle_{\varrho_{m,\beta_m}}=0. \nonumber
\end{eqnarray}

Since the Hamiltonian generating the dynamics $H_\Lambda$ is the particularly simple form of free bosons in terms of the $b_j$ operators, one finds
\begin{equation}\label{eq:tau-b}
\tau_t^{H_\Lambda}(b_j)=e^{-it\gamma_j} b_j \text{ and } \tau_t^{H_\Lambda}(b^*_j)=e^{it\gamma_j}b^*_j,
\end{equation}
where $\tau_t^{H_\Lambda}(b_j)$ is the Heisenberg dynamics of the (operator) $b_j$ given as $\tau_t^{H_\Lambda}(b_j)=e^{it H_\Lambda} b_j e^{-it H_\Lambda}$.

The $H_\Lambda$ dynamics (\ref{eq:tau-b}) with (\ref{eq:pq-to-b}) imply that for any state $\varrho$, the $\varrho_t$ expectation of the Weyl operator (where $\varrho_t=e^{-itH_\Lambda}\varrho e^{itH_\Lambda}$ is the Schr\"odinger time evolution of $\varrho$ with respect to $H_\Lambda$.) is given by the formula, see also e.g., \cite[Thm 5.2.8 (4)]{Bratteli-Robinson},
 \begin{equation}\label{eq:Weyl:dynamics}
 \langle\W(f)\rangle_{\varrho_t}=\left\langle\W\big(\tilde{f}_t\big)\right\rangle_{\varrho}, \text{ where } \tilde{f}_t=E_t  \tilde{f}
 \end{equation}
and $E_t :\ell^2(\Lambda,\R)^{\oplus 2}\rightarrow \ell^2(\Lambda,\R)^{\oplus 2}$ is the mapping that generates the dynamics (defined almost surely),
\begin{equation}\label{def:E_t }
E_t =
\begin{pmatrix}
\cos(2t h_\Lambda^{1/2}) & -h_\Lambda^{1/2}\sin(2t h_\Lambda^{1/2})\\
h_\Lambda^{-1/2}\sin(2t h_\Lambda^{1/2}) & \cos(2t h_\Lambda^{1/2})
\end{pmatrix}.
\end{equation}
The Weyl operator expectations in (\ref{eq:corr:product}) and (\ref{eq:Weyl:dynamics}), and the position-momentum correlations (\ref{eq:corr:values}) give instantly that the Weyl operator expectation at $\varrho_{t,\beta}$ is
\begin{equation}\label{eq:Weyl:dyanmics2}
\langle\W(f)\rangle_{\varrho_{t,\beta}}=\exp\left(-\frac{1}{4}\left\langle\tilde{f},\Gamma_{\varrho_{t,\beta}}\tilde{f}\right\rangle\right)
\end{equation}
where 
\begin{equation}\label{def:corr:rho-t-beta}
\Gamma_{\varrho_{t,\beta}}=E_t ^T \Gamma_{\varrho_{\beta}} E_t 
=E_t ^T
\begin{pmatrix}
\bigoplus_{m=1}^M \coth(\beta_m h_{\Lambda_m}^{1/2})h_{\Lambda_m}^{-1/2} & 0\\
0 & \bigoplus_{m=1}^M \coth(\beta_m h_{\Lambda_m}^{1/2})h_{\Lambda_m}^{1/2}
\end{pmatrix}
E_t .
\end{equation}
Here $\Gamma_{\varrho_{t,\beta}}$ represents the position-momentum correlation matrix of $\varrho_{t,\beta}$. So (\ref{eq:Weyl:dyanmics2}) shows that $\varrho_{t,\beta}$ is a gaussian state (or more generally, the time evolution of a gaussian state is gaussian).

We note here that  the mapping $E_t $ is symplectic, i.e., 
\begin{equation}\label{def:symp}
E_t  J E_t ^T=
J \text{ and } E_t^{-1}=J^T E_t^T J\text{ where }J=
\begin{pmatrix}
0 & -\idty_\Lambda\\
\idty_\Lambda & 0
\end{pmatrix},
\end{equation}
giving the block form of $E_t ^{-1}$
\begin{equation}\label{def:Tt-inv}
E_t ^{-1}=\begin{pmatrix}
\cos(2t h_\Lambda^{1/2}) & h_\Lambda^{1/2}\sin(2t h_\Lambda^{1/2})\\
-h_\Lambda^{-1/2}\sin(2t h_\Lambda^{1/2}) & \cos(2t h_\Lambda^{1/2})
\end{pmatrix},
\end{equation}
and observe that the inverse of the correlation matrix is (defined almost surely)
\begin{equation}\label{def:Gamma-beta-t-inv}
\Gamma_{\varrho_{t,\beta}}^{-1}= E_t^{-1} 
\begin{pmatrix}
\bigoplus_{m=1}^M \tanh(\beta_m h_{\Lambda_m}^{1/2})h_{\Lambda_m}^{1/2} & 0\\
0 &\bigoplus_{m=1}^M \tanh(\beta_m h_{\Lambda_m}^{1/2})h_{\Lambda_m}^{-1/2}
\end{pmatrix}
 (E_t^T)^{-1}.
\end{equation}

\subsection{A formula for the logarithmic negativity}\label{subsec:LN:Formula}
Here we provide an exact formula for the logarithmic negativity of the time-evolved state $\varrho_{t,\beta}$ given in (\ref{eq:time-evolution}), see Theorem \ref{thm:LN}. 

To ease notation in the following, we introduce the mapping  $\Pi_{Q}$  on $\mathcal{B}(\ell^2(\Lambda,\R)^{\oplus 2})$ for every orthogonal $Q\in\mathcal{B}(\ell^2(\Lambda,\R)^{\oplus 2})$, defined as
\begin{equation}\label{def:pi}
\Pi_Q(A)=Q^T A Q.
\end{equation}
Due to the identification 
$
\ell^2(\Lambda;\C)\sim \ell^2(\Lambda,\R)^{\oplus 2},
$
and the system decomposition $\Lambda=\Lambda_0 \cup \Lambda\setminus\Lambda_0$  with the corresponding Hilbert space bipartition  $\mathcal{H}_\Lambda=\mathcal{H}_1\otimes\mathcal{H}_2$ given in (\ref{bipartition}), some direct sums are understood to be acting on
$
\left(\ell^2(\Lambda_0)\oplus \ell^2(\Lambda\setminus\Lambda_0)\right)^{\oplus 2}.
$

\begin{thm}\label{thm:LN}
Fix $\Lambda_0\subset\Lambda$, and consider the initial product state $\varrho_\beta$ of  local thermal states (\ref{def:rho0}) and its Schr\"{o}dinger time evolution $\varrho_{t,\beta}=e^{-itH_{\Lambda}} \varrho_\beta e^{itH_{\Lambda}}$ under the full harmonic oscillator systems $H_\Lambda$.

Then the logarithmic negativity of $\varrho_{t,\beta}$ with respect to the decomposition (\ref{bipartition}) is given by the formula (almost surely)
\begin{equation}\label{eq:LN-1}
\mathcal{N}(\varrho_{t,\beta})=\frac{1}{4}\Tr\left[\chi_{(1,\infty)}(\Upsilon)\log \Upsilon\right] \text{ where } 
\Upsilon:=\Gamma_{\varrho_{t,\beta}}^{-1/2} \Pi_{\widetilde{\mathbb{P}}\circ J}(\Gamma_{\varrho_{t,\beta}}^{-1})\Gamma_{\varrho_{t,\beta}}^{-1/2}>0
\end{equation}
where 
\begin{itemize}
\item we use the standard notation $\chi_{(1,\infty)}(\Upsilon)$ to denote the spectral projection of $\Upsilon$ onto the interval $(1,\infty)$, i.e., the orthogonal projection onto the subspace $\Upsilon>1$.
\item $\Gamma_{\varrho_{t,\beta}}^{-1}$ is the inverse of the $\varrho_{t,\beta}$-correlation matrix, it is given by the equations (\ref{def:Gamma-beta-t-inv}) and (\ref{def:Tt-inv}).
\item $\widetilde{\mathbb{P}}=\mathbb{P}\oplus \mathbb{P}$ on $\mathcal{B}(\ell^2(\Lambda,\R)^{\oplus 2})$, and $\mathbb{P}=(-\idty_{\Lambda_0})\oplus \idty_{\Lambda\setminus\Lambda_0}$ with respect to $\ell^2(\Lambda_0)\oplus\ell^2(\Lambda\setminus\Lambda_0)$.
\item $\Pi_{\widetilde{\mathbb{P}}\circ J}(\cdot)$ is defined in (\ref{def:pi}) and $\widetilde{\mathbb{P}}\circ J$ is the composition of $\widetilde{\mathbb{P}}$ and $J$ given in (\ref{def:symp}).
\end{itemize}
\end{thm}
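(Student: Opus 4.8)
The plan is to compute $\mathcal N(\varrho_{t,\beta})=\log\|\varrho_{t,\beta}^{T_1}\|_1$ with the Gaussian machinery of Vidal and Werner, adapted as in \cite{NSS13,BSW19}, and then to recast the resulting sum over symplectic eigenvalues as the stated trace formula. Write $\Gamma:=\Gamma_{\varrho_{t,\beta}}$, which by (\ref{eq:Weyl:dyanmics2})--(\ref{def:corr:rho-t-beta}) is the (almost surely invertible) correlation matrix of the Gaussian state $\varrho_{t,\beta}$, with symplectic eigenvalues at least $1$ by the canonical commutation relations (\ref{eq:pq-com}). First I would record that partial transposition with respect to $\mathcal H_1$ acts on a Gaussian operator by reflecting the momenta supported in $\Lambda_0$: on the characteristic function this is the substitution $\tilde f\mapsto\Theta\tilde f$ with $\Theta=\idty_\Lambda\oplus\mathbb P$ (identity on the position block, $\mathbb P$ on the momentum block), so that $\varrho_{t,\beta}^{T_1}$ is again Gaussian with correlation matrix $\widetilde\Gamma:=\Theta\Gamma\Theta$. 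Bringing $\widetilde\Gamma$ to symplectic normal form (Williamson) with symplectic eigenvalues $\{\tilde\nu_k\}_{k=1}^{|\Lambda|}$, the Vidal--Werner trace-norm computation factorizes over the symplectic modes and yields $\log\|\varrho_{t,\beta}^{T_1}\|_1=-\sum_{k:\,\tilde\nu_k<1}\log\tilde\nu_k$; only the modes with $\tilde\nu_k<1$, i.e.\ the directions in which partial transposition violates the uncertainty bound, contribute.

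The heart of the argument is to identify these symplectic eigenvalues with the spectrum of the explicitly given operator $\Upsilon$. The key observation is the algebraic identity $\Theta\,J\,\Theta=\widetilde{\mathbb P}\circ J$, checked directly from the block forms of $J$, $\mathbb P$ and $\Theta$, which is precisely what makes the non-symplectic momentum reflection $\Theta$ enter through the orthogonal map $Q:=\widetilde{\mathbb P}\circ J$ of (\ref{def:pi}). Using $Q^T=-Q$ and $Q^TQ=\idty_\Lambda^{\oplus2}$, the operator $\Upsilon=\Gamma^{-1/2}\Pi_Q(\Gamma^{-1})\Gamma^{-1/2}$ is similar (conjugating by $\Gamma^{1/2}$) to $M:=\bigl(Q^T\Gamma^{-1}Q\bigr)\Gamma^{-1}=-Q\Gamma^{-1}Q\,\Gamma^{-1}$. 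Conjugating $M$ by the involution $\Theta$ and substituting $\Theta J\Theta=Q$ twice gives $\Theta M\Theta=-(J\widetilde\Gamma^{-1})^2$, whose eigenvalues are exactly $\{\tilde\nu_k^{-2}\}$, each with multiplicity two, since $\widetilde\Gamma^{-1}$ has symplectic eigenvalues $\tilde\nu_k^{-1}$ and $J\widetilde\Gamma^{-1}$ therefore has eigenvalues $\pm i\tilde\nu_k^{-1}$. Hence $\Upsilon>0$ has spectrum $\{\tilde\nu_k^{-2}\}$ with multiplicity two, and an eigenvalue exceeds $1$ precisely when the corresponding $\tilde\nu_k<1$.

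Assembling the two computations, the spectral projection $\chi_{(1,\infty)}(\Upsilon)$ selects exactly the entangling modes and
\[
\tfrac14\Tr\!\left[\chi_{(1,\infty)}(\Upsilon)\log\Upsilon\right]=\tfrac14\!\!\sum_{k:\,\tilde\nu_k<1}\!\!2\cdot\bigl(-2\log\tilde\nu_k\bigr)=-\!\!\sum_{k:\,\tilde\nu_k<1}\!\!\log\tilde\nu_k=\log\|\varrho_{t,\beta}^{T_1}\|_1,
\]
where one factor $2$ is the multiplicity and the other comes from the squaring $\tilde\nu_k\mapsto\tilde\nu_k^{-2}$, so that the prefactor $\tfrac14$ reproduces $\mathcal N(\varrho_{t,\beta})$. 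I expect the main obstacle to be the rigorous justification of the Vidal--Werner factorization in the first step: one must diagonalize $\widetilde\Gamma$ symplectically, reduce the trace norm of the non-positive operator $\varrho_{t,\beta}^{T_1}$ to a product of single-mode contributions, and treat the boundary case $\beta_m=\infty$ (pure local ground states) where some symplectic eigenvalues of $\Gamma$ saturate at $1$. The determinant check $\det\Upsilon=\prod_k\tilde\nu_k^{-4}$, consistent with $\prod_k\tilde\nu_k=\prod_k\nu_k$ forced by the congruence $\widetilde\Gamma=\Theta\Gamma\Theta$ (with $\nu_k$ the symplectic eigenvalues of $\Gamma$), is a useful sanity test that the multiplicities and powers above are correct.
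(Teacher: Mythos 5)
Your proposal is correct and follows essentially the same route as the paper: express the partial transpose via the substitution $\tilde f\mapsto(\idty_\Lambda\oplus\mathbb P)\tilde f$ on the characteristic function, apply Williamson normal form and the Vidal--Werner single-mode trace-norm computation to get $\mathcal N=-\sum_{\tilde\nu_k<1}\log\tilde\nu_k$, and then identify $\operatorname{spec}(\Upsilon)=\{\tilde\nu_k^{-2}\}$ (with multiplicity two) through the identity $\Theta J\Theta=\widetilde{\mathbb P}\circ J$. The paper organizes the last step through the operator $Y^{-1}=i\,\Pi_{\idty_\Lambda\oplus\mathbb P}(\Gamma^{-1/2})\,J\,\Pi_{\idty_\Lambda\oplus\mathbb P}(\Gamma^{-1/2})$ and its square rather than your similarity chain $\Upsilon\sim-(J\widetilde\Gamma^{-1})^2$, but the two computations are equivalent.
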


\begin{proof}
The first part of the proof goes along the lines of \cite[Proof of Thm 4.1]{AR18} or \cite[Proof of Thm 3.4]{NSS13}. We present the essential steps here for completeness.

A direct calculation shows that the expectation of the Weyl operator at $\varrho_{t, \beta}^{T_1}$ is given by
\begin{equation}
\langle\W(f)\rangle_{\varrho_{t, \beta}^{T_1}}=\left\langle\W\left(
(\idty_\Lambda\oplus  \mathbb{P})
\tilde{f}
\right)\right\rangle_{\varrho_{t, \beta}}=\exp\left(-\frac{1}{4}
\left\langle\tilde{f},
\Pi_{\idty_\Lambda\oplus\mathbb{P}}(\Gamma_{\varrho_{t,\beta}})
\tilde{f}\right\rangle\right)
\end{equation}
where $\mathbb{P}$ is the diagonal operator given in Theorem \ref{thm:LN}.

Since $\Pi_{\idty_\Lambda\oplus\mathbb{P}}(\Gamma_{\varrho_{t,\beta}})$ is real symmetric and positive definite (almost surely), then by the Williamson Theorem, see, e.g., \cite[Thm 8.11]{Williamson}, there exists a $2|\Lambda|\times 2|\Lambda|$ symplectic $S$ such that
\begin{equation}
S^T \Pi_{\idty_\Lambda\oplus\mathbb{P}}(\Gamma_{\varrho_{t,\beta}})S
=D\oplus D
, \text{ where } D=\diag\{d_1,\ldots,d_{|\Lambda|}\}.
\end{equation}
Here $d_j>0$ for all $j$, are the symplectic eigenvalues of $\Pi_{\idty_\Lambda\oplus\mathbb{P}}(\Gamma_{\varrho_{t,\beta}})$, which are the positive eigenvalues of 
\begin{equation}\label{def:Y}
Y:=i \left(\Pi_{\idty_\Lambda\oplus\mathbb{P}}(\Gamma_{\varrho_{t,\beta}})\right)^{1/2} J \left(\Pi_{\idty_\Lambda\oplus\mathbb{P}}(\Gamma_{\varrho_{t,\beta}})\right)^{1/2}.
\end{equation}

Symplectic $S$ induces a unitary $U\in\mathcal{B}(\mathcal{H}_\Lambda)$ such that, see \cite{Symp1,Symp2}.
\begin{equation}
\langle\W(f)\rangle_{U^* \varrho_{t,\beta}^{T_1} U}=\langle\W(S \tilde{f})\rangle_{\varrho_{t,\beta}}
=\exp\left(-\frac{1}{4}\left\langle\tilde{f},
(D\oplus D)\tilde{f}\right\rangle\right)=e^{-\frac{1}{4}\sum_{j=1}^{|\Lambda|}d_j|f_j|^2}.
\end{equation}
\cite[Thm 3.2]{AR18} or \cite[Lemma 3.5]{NSS13} defines explicitly the operators $\{\varrho_{d_j}\}_j$ with 
\begin{equation}\label{eq:norm1-rhoj}
\|\varrho_{d_j}\|_1=\begin{cases}
1 & \text{if } d_j\geq 1 \\
1/d_j & \text{if } d_j<1
\end{cases},
\end{equation}
for which
\begin{equation}
\langle\W(f)\rangle_{\otimes_j \varrho_{d_j}}=e^{-\frac{1}{4}\sum_{j=1}^{|\Lambda|}d_j|f_j|^2}
=
\langle\W(f)\rangle_{U^* \varrho_{t,\beta}^{T_1} U} \text{ for all }f\in\ell^2(\Lambda).
\end{equation}
This provides the unitary decomposition of $\varrho_{t,\beta}^{T_1}$, see \cite[Lemma 3.1]{NSS13}
\begin{equation}
\varrho_{t,\beta}^{T_1}= U \bigotimes_{j=1}^{|\Lambda|}\varrho_j U^*.
\end{equation}
Meaning that $\|\varrho_{t,\beta}^{T_1}\|_1=\prod_{j}\|\varrho_j\|_1$, then (\ref{eq:norm1-rhoj}) gives that the logarithmic negativity (\ref{def:LN}) is given by the formula
\begin{equation}
\mathcal{N}(\varrho_{t,\beta})=\sum_{j;\ d_j<1}\log d_j^{-1}=\Tr\left[\chi_{(1,\infty)}(Y^{-1})\log Y^{-1}\right]
\end{equation}
where it follows from (\ref{def:Y}) and $J^{-1}=J^T=-J$ that
\begin{equation}
Y^{-1}=i \Pi_{\idty_\Lambda\oplus\mathbb{P}}(\Gamma_{\varrho_{t,\beta}}^{-1/2}) J \Pi_{\idty_\Lambda\oplus\mathbb{P}}(\Gamma_{\varrho_{t,\beta}}^{-1/2}).
\end{equation}

Note that $Y^*=Y$ and
$Y^T=-Y$ meaning that $Y$, and hence $Y^{-1}$, has a symmetric spectrum about zero. Thus $\mathcal{N}(\varrho_{t,\beta})$ can be written as
\begin{equation}
\mathcal{N}(\varrho_{t,\beta})=\frac{1}{2}\Tr[\chi_{(1,\infty)}((Y^{-1})^2)\log \sqrt{(Y^{-1})^2}]=\frac{1}{4}\Tr\left[\Pi_{\idty_\Lambda\oplus\mathbb{P}}\big(\chi_{(1,\infty)}((Y^{-1})^2)\big)
\log \Pi_{\idty_\Lambda\oplus\mathbb{P}}\big((Y^{-1})^2\big)\right].
\end{equation}
The desired formula for the logarithmic negativity (\ref{eq:LN-1}) follows by observing that 
\begin{equation}
\Pi_{\idty_\Lambda\oplus\mathbb{P}}\left(\chi_{(1,\infty)}((Y^{-1})^2)\right)= \chi_{(1,\infty)}\left(\Pi_{\idty_\Lambda\oplus\mathbb{P}}((Y^{-1})^2)\right),
\end{equation}
and that
$\Pi_{\idty_\Lambda\oplus\mathbb{P}}\big((Y^{-1})^2\big)=\Upsilon$ defined in (\ref{eq:LN-1}).
\end{proof}

\subsection{An upper bound}\label{subsec:LN:UpperBound}
Recall that the main object in the logarithmic negativity formula (\ref{eq:LN-1}) is the inverse of the $\varrho_{t,\beta}$-correlation matrix $\Gamma_{\varrho_{t,\beta}}^{-1}$, given by the formula
\begin{equation}\label{eq:Gamma-beta-t-inv-2}
\Gamma_{\varrho_{t,\beta}}^{-1}=E_t^{-1} \begin{pmatrix}
     \bigoplus_{j=1}^M h_{\Lambda_m}^{\frac{1}{2}}\tanh(\beta_m h_{\Lambda_m}^{1/2}) & 0 \\
     0 & \bigoplus_{m=1}^M h_{\Lambda_m}^{-\frac{1}{2}}\tanh(\beta_m h_{\Lambda_m}^{1/2}) \\
   \end{pmatrix} \left(E_t^{-1}\right)^T
\end{equation}
where  $E_t $ is defined in (\ref{def:E_t }), and we re-state $E_t ^{-1}$ here for the   reader's convenience,
\begin{equation}\label{def:Tt-inv}
E_t ^{-1}=\begin{pmatrix}
\cos(2t h_\Lambda^{1/2}) & h_\Lambda^{1/2}\sin(2t h_\Lambda^{1/2})\\
-h_\Lambda^{-1/2}\sin(2t h_\Lambda^{1/2}) & \cos(2t h_\Lambda^{1/2})
\end{pmatrix}.
\end{equation}
The main technical difference between the entanglement formulas of equilibrium states in the literature, see e.g., \cite{VidalWerner, NSS13, AR18, BSW19} and the formula of entanglement here is that the (inverse) correlation matrix  for our class of non-equilibrium states is not block-diagonal. This is due to the time evolution that is characterized by $E_t$.  Moreover, having entries with no a-priori upper bounds, that are uniform in the disorder and in the volume of the system,  in the formula of $\Gamma_{\varrho_{t,\beta}}^{-1}$, i.e., $h_\Lambda^{-1/2}$ and $\bigoplus_m h_{\Lambda_m}^{-1/2}$, is making the process of deriving an upper bound of the logarithmic negativity, that yields eventually an area law, far from being trivial. 

The plan is to tame the unbounded terms using the deterministic facts: Lemma \ref{lem:bound} and (\ref{eq:h-bound}), and using the eigenfunction correlators (\ref{def:EFC}) after averaging the disorder. 

To this end, we direct our attention to the term $\bigoplus_m h_{\Lambda_m}^{-1/2}$ in the second diagonal block in the formula of inverse correlation matrix $\Gamma_{\varrho_{t,\beta}}^{-1}$ in (\ref{eq:Gamma-beta-t-inv-2}). 

\begin{remark}\label{rem:beta}
While it is straight forward to see  that
$
h_{\Lambda_m}^{-\frac{1}{2}}\tanh(\beta_m h_{\Lambda_m}^{1/2})\leq \beta_m \idty_{\Lambda_m}$ for every $m=1,\ldots,M$
giving that
\begin{equation}\label{eq:DNU}
\big\|\bigoplus_m h_{\Lambda_m}^{-\frac{1}{2}}\tanh(\beta_m h_{\Lambda_m}^{1/2}) \big\| \leq \max_{m}\beta_m;
\end{equation}
such bound yields an area law with a pre-factor that is increasing in the maximal inverse temperature $\max_m\beta_m$, which diverges in the case of having at least one initial local ground state in the product state. Our entanglement bound is independent of the initial (inverse) temperatures associated with the product states. In the process of obtaining such entanglement bound, we do not use the bound (\ref{eq:DNU}), and instead we benefit from the bound on the eigenfunction correlators (\ref{def:EFC}), (\ref{def:EFC-identity}), and  (\ref{def:EFC-half}) that have bounds independent of the function $|u|\leq 1$.
\end{remark}

We write $\Gamma_{\varrho_{t,\beta}}^{-1}$ as a product of two matrices, one takes the bounded term $\bigoplus_m h_{\Lambda_m}^{1/2}$ and the other takes the unbounded part $\bigoplus_m h_{\Lambda_m}^{-1/2}$. So,
\begin{equation}
\Gamma_{\varrho_{t,\beta}}^{-1}=M^{(1)}_{t,\beta}M^{(2)}_{t,\beta}, \text{ and hence } \Pi_{J}\big(\Gamma_{\varrho_{t,\beta}}^{-1}\big)=\Pi_{J}(M^{(1)}_{t,\beta})\Pi_{J}(M^{(2)}_{t,\beta})
\end{equation}
where
\begin{eqnarray}
M^{(1)}_{t,\beta}&:=& E_t ^{-1}
\begin{pmatrix}
     \bigoplus_{j=1}^M h_{\Lambda_m}^{\frac{1}{2}}\tanh(\beta_m h_{\Lambda_m}^{1/2}) & 0 \\
     0 & \idty_\Lambda \\
   \end{pmatrix}, \label{def:M1}\\
M^{(2)}_{t,\beta}&:=&\begin{pmatrix}
     \idty_\Lambda & 0 \\
     0 & \bigoplus_{m=1}^M h_{\Lambda_m}^{-\frac{1}{2}}\tanh(\beta_m h_{\Lambda_m}^{1/2}) \\
   \end{pmatrix}
(E_t ^{-1})^T \label{def:M2}.
\end{eqnarray}
We observe that these essential matrices are $2\times 2$-block matrices,  for which we use the notation $(\cdot)_{i,j}$ to denote the $ij$-th block for $i,j\in\{1,2\}$.
Then we can prove the following bound for $\mathcal{N}(\varrho_{t,\beta})$,
\begin{lem}\label{lem:LN:upperbound}
For any $\alpha\in(0,1]$, the logarithmic negativity of $\varrho_{t,\beta}$ given by the formula (\ref{eq:LN-1}) in Theorem \ref{thm:LN} has the following upper bound
\begin{equation}\label{eq:LN:upperbound-0}
\mathcal{N}(\varrho_{t,\beta})\leq \frac{1}{\alpha}\sum_{i,j=1}^2\left\|\left([M^{(2)}_{t,\beta},\widetilde{\mathbb{P}}]\Pi_{J}(M^{(1)}_{t,\beta})\right)_{i,j}\right\|_\alpha^\alpha, \text{ for any }\alpha\in(0,1],
\end{equation}
where $\|\cdot\|_{\alpha}=\left(\Tr|\cdot|^{\alpha}\right)^{1/\alpha}$. (It is the trace norm when $\alpha=1$ and it denotes the Schatten $\alpha$-quasi-norm when $\alpha\in(0,1)$.) $M^{(1)}_{t,\beta}$ and $M^{(2)}_{t,\beta}$ are given by the formulas (\ref{def:M1}) and (\ref{def:M2}); respectively.
\end{lem}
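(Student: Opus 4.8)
The plan is to start from the exact negativity formula of Theorem~\ref{thm:LN},
\[
\mathcal{N}(\varrho_{t,\beta})=\tfrac14\Tr\!\left[\chi_{(1,\infty)}(\Upsilon)\log\Upsilon\right]=\tfrac14\sum_j(\log\lambda_j)_+,
\]
where $\lambda_j>0$ are the eigenvalues of the Hermitian positive matrix $\Upsilon$ and $(\cdot)_+=\max(\cdot,0)$, and to reduce the right-hand side to a Schatten quasi-norm of $\X:=[M^{(2)}_{t,\beta},\widetilde{\mathbb{P}}]\,\Pi_J(M^{(1)}_{t,\beta})$. Writing $\Gamma_{\varrho_{t,\beta}}^{-1}=M^{(1)}M^{(2)}$ (I suppress the subscripts $t,\beta$), using that $\Pi_J$ is multiplicative because $J$ is orthogonal, and that $\Pi_{\widetilde{\mathbb{P}}\circ J}(\,\cdot\,)=\widetilde{\mathbb{P}}\,\Pi_J(\,\cdot\,)\,\widetilde{\mathbb{P}}$ since $\widetilde{\mathbb{P}}$ commutes with $J$, I would first replace $\Upsilon$ by a non-Hermitian matrix with the same spectrum. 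By cyclicity $\Upsilon$ has the same eigenvalues as $\Pi_{\widetilde{\mathbb{P}}\circ J}(\Gamma^{-1})\,\Gamma^{-1}=\widetilde{\mathbb{P}}\,\Pi_J(M^{(1)})\Pi_J(M^{(2)})\,\widetilde{\mathbb{P}}\,M^{(1)}M^{(2)}$, and moving $M^{(2)}$ cyclically to the front shows $\mathrm{spec}(\Upsilon)=\mathrm{spec}(R_1R_2)$ with $R_1:=M^{(2)}\widetilde{\mathbb{P}}\,\Pi_J(M^{(1)})$ and $R_2:=\Pi_J(M^{(2)})\,\widetilde{\mathbb{P}}\,M^{(1)}$.

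The crucial algebraic observation is that the ``$\widetilde{\mathbb{P}}$-free'' parts of $R_1,R_2$ are contractions. A short computation from the block forms \eqref{def:M1}--\eqref{def:M2} and the symplectic identity \eqref{def:symp} gives $K_1:=M^{(2)}\Pi_J(M^{(1)})=\idty\oplus\bigoplus_m\tanh^2(\beta_m h_{\Lambda_m}^{1/2})$ and $K_2:=\Pi_J(M^{(2)})M^{(1)}=\bigoplus_m\tanh^2(\beta_m h_{\Lambda_m}^{1/2})\oplus\idty$, so $0\le K_i\le\idty$. Commuting a single $\widetilde{\mathbb{P}}$ through each factor, and using $[\Pi_J(M^{(2)}),\widetilde{\mathbb{P}}]=\Pi_J([M^{(2)},\widetilde{\mathbb{P}}])$, I obtain the decompositions $R_1=\widetilde{\mathbb{P}}K_1+\X$ and $R_2=\widetilde{\mathbb{P}}K_2+\Pi_J(\X)$, where $\X$ is exactly the operator in the statement and $\|\widetilde{\mathbb{P}}K_i\|\le1$.

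The main estimate then controls $\sum_j(\log\lambda_j)_+$ entirely through multiplicative majorization, which is essential because the factors $R_1,R_2$ carry the blocks $h_\Lambda^{-1/2}$ and are not bounded uniformly in the volume or the disorder. Since the $\lambda_j=\lambda_j(R_1R_2)$ are positive, Weyl's majorant inequality together with the convexity and monotonicity of $t\mapsto(t)_+$ gives $\sum_j(\log\lambda_j)_+\le\sum_j(\log\sigma_j(R_1R_2))_+$; Horn's inequality $\prod_{j\le k}\sigma_j(R_1R_2)\le\prod_{j\le k}\sigma_j(R_1)\sigma_j(R_2)$ together with $(a+b)_+\le a_++b_+$ then splits the product additively into $\sum_j(\log\sigma_j(R_1))_++\sum_j(\log\sigma_j(R_2))_+$. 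Finally, for each factor the contraction-plus-commutator structure and the additive Weyl bound for singular values yield $\sigma_j(R_1)\le1+\sigma_j(\X)$, whence, by the elementary inequalities $\log x\le\frac1\alpha(x^\alpha-1)\le\frac1\alpha(x-1)^\alpha$ valid for $x\ge1$ and $\alpha\in(0,1]$, one gets $(\log\sigma_j(R_1))_+\le\frac1\alpha\sigma_j(\X)^\alpha$ and hence $\sum_j(\log\sigma_j(R_1))_+\le\frac1\alpha\|\X\|_\alpha^\alpha$; likewise $\sum_j(\log\sigma_j(R_2))_+\le\frac1\alpha\|\Pi_J(\X)\|_\alpha^\alpha$.

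Collecting the estimates gives $\mathcal{N}(\varrho_{t,\beta})\le\frac1{4\alpha}\big(\|\X\|_\alpha^\alpha+\|\Pi_J(\X)\|_\alpha^\alpha\big)$. Because $J$ is orthogonal, $\Pi_J$ preserves singular values, so $\|\Pi_J(\X)\|_\alpha=\|\X\|_\alpha$ and the bound becomes $\frac1{2\alpha}\|\X\|_\alpha^\alpha$; the claimed inequality then follows from the block sub-additivity $\|\X\|_\alpha^\alpha\le\sum_{i,j=1}^2\|\X_{i,j}\|_\alpha^\alpha$ of the $\alpha$-quasi-norm for $\alpha\in(0,1]$ (embedding the blocks by partial isometries and using $\|A+B\|_\alpha^\alpha\le\|A\|_\alpha^\alpha+\|B\|_\alpha^\alpha$), with a factor $\tfrac12$ to spare. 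The hard part, and the reason for routing everything through Weyl/Horn majorization rather than a direct perturbative expansion of $\Upsilon-\idty$, is precisely the non-uniform unboundedness of $R_1,R_2$: expanding the product $R_1R_2$ and estimating the cross terms additively inevitably produces a term quadratic in the commutator (schematically $\X\,\Pi_J(\X)$) that cannot be dominated by the linear right-hand side, whereas Horn's inequality never multiplies the two unbounded factors together at the level of norms.
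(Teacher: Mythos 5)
Your proposal is correct and follows essentially the same route as the paper: the same factorization $\Gamma_{\varrho_{t,\beta}}^{-1}=M^{(1)}_{t,\beta}M^{(2)}_{t,\beta}$, the same reduction to the spectrum of $R_1R_2$, the same Weyl--Horn--Fan majorization chain with the commutator-plus-contraction splitting, the same elementary $\log$-versus-$\alpha$-power bound, and the same block subadditivity of $\|\cdot\|_\alpha^\alpha$ at the end. The only cosmetic differences are that you exploit $R_2=\Pi_J(R_1)$ at the final norm stage rather than immediately after Horn's inequality (the paper uses $\phi(x^2)=2\phi(x)$ there), and that you invoke the sharp subadditivity $\|A+B\|_\alpha^\alpha\leq\|A\|_\alpha^\alpha+\|B\|_\alpha^\alpha$ where the paper settles for the factor-$2$ version, which only affects the unimportant constant.
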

We note here that $\|A\|_\alpha=\left(\Tr|A|^{\alpha}\right)^{1/\alpha}$ is a matrix norm if and only if $\alpha\geq 1$. For $0<\alpha<1$, $\|\cdot\|_\alpha$ is denoted by the \emph{Schatten $\alpha$-quasi-norm}, it is absolutely homogeneous and positive definite, but it does not satisfy the triangle inequality.
\begin{proof}
We will prove the following bound for $\mathcal{N}(\varrho_{t,\beta})$,
\begin{equation}\label{eq:LN:upperbound}
\mathcal{N}(\varrho_{t,\beta})\leq \frac{1}{2\alpha}\left\|\left[M^{(2)}_{t,\beta},\widetilde{\mathbb{P}}\right]\Pi_{J}(M^{(1)}_{t,\beta})\right\|_{\alpha}^{\alpha},
\end{equation}
then we use the fact that for any $2\times 2$-block matrix $A=\big((A)_{i,j}\big)$, the $\alpha$-power of the Schatten $\alpha$-quasi-norm $\|A\|_\alpha^\alpha$ is bounded in terms of the sum of the $\|\cdot\|_\alpha^\alpha$ of the four blocks of $A$ as follows
\begin{equation}\label{lem:A}
\|A\|_\alpha^\alpha\leq 2\sum_{i,j=1}^2\|(A)_{i,j}\|_\alpha^\alpha,
\end{equation}
to get the desired bound (\ref{eq:LN:upperbound-0}).

(\ref{lem:A}) can be seen by first writing the block matrix $A$ as
\begin{equation}\label{eq:A-decomp}
A=
(A)_{1,1} \oplus (A)_{2,2}
+
\begin{pmatrix}
0 & (A)_{1,2}\\
(A)_{2,1} & 0
\end{pmatrix}.
\end{equation}
Then observe that 
\begin{equation}
\|(A)_{1,1} \oplus (A)_{2,2}\|_\alpha^\alpha=\|(A)_{1,1}\|_\alpha^\alpha +\| (A)_{2,2}\|_\alpha^\alpha
\end{equation}
and that
\begin{equation}
\begin{pmatrix}
0 & (A)_{1,2}\\
(A)_{2,1} & 0
\end{pmatrix}
\ \begin{pmatrix}
0 & \idty\\
\idty & 0
\end{pmatrix}=
(A)_{1,2} \oplus (A)_{2,1},
\end{equation}
hence,  
\begin{equation}
\left\|
\begin{pmatrix}
0 & (A)_{1,2}\\
(A)_{2,1} & 0
\end{pmatrix}
\right\|_\alpha^\alpha=\|(A)_{1,2}\|_\alpha^\alpha +\| (A)_{2,1}\|_\alpha^\alpha.
\end{equation}
Inequality (\ref{lem:A}) follows directly by applying the well known inequality, see e.g.  \cite[Thm 7.8]{Weidmann}
\begin{equation}\label{eq:alpha-norm-bound-0}
\|A_1+A_2\|_\alpha^\alpha\leq 2\|A_1\|_\alpha^\alpha+ 2\|A_2\|_\alpha^\alpha.
\end{equation}

In the following, we prove the bound (\ref{eq:LN:upperbound}).

First, we will write the formula of the logarithmic negativity (\ref{eq:LN-1}) in terms of the following non-negative monotone increasing function on $[0,\infty)$
\begin{equation}
\phi(x):=\left\{
           \begin{array}{ll}
             0 & \hbox{if } 0\leq x\leq 1 \\
             \log x & \hbox{if } x>1
           \end{array}
         \right..
\end{equation}
For later application of inequalities due to Weyl and Horn \cite[Thm 1.15]{Simon}, we note that $x\mapsto \phi(e^x)$ is convex. 

Formula (\ref{eq:LN-1}) can be written in terms of the function $\phi$ and the matrices $M^{(1)}_{t,\beta}$ and $M^{(2)}_{t,\beta}$ defined in (\ref{def:M1}) and (\ref{def:M2}), respectively, as
\begin{eqnarray}\label{eq:LN:2}
4\mathcal{N}(\varrho_{t,\beta})&=&\sum_{j}\phi\left(\lambda_j\left(\Gamma_{\varrho_{t,\beta}}^{-1}\widetilde{\mathbb{P}}
\Pi_J\big(\Gamma_{\varrho_{t,\beta}}^{-1}\big)
 \widetilde{\mathbb{P}}\right)\right) \nonumber\\
 &=& 
 \sum_{j}\phi\left(\lambda_j\left(M^{(2)}_{t,\beta} \widetilde{\mathbb{P}}
\Pi_J(M^{(1)}_{t,\beta})\Pi_J(M^{(2)}_{t,\beta}) \widetilde{\mathbb{P}} M^{(1)}_{t,\beta}\right)\right),
\end{eqnarray}
where $\lambda_1(\cdot)\geq \lambda_2(\cdot)\geq\ldots\geq \lambda_{|\Lambda|}(\cdot)$ denote the eigenvalues in decreasing order. In (\ref{eq:LN:2}) we used the simple fact that $AB$ and $BA$ have the same non-zero eigenvalues.

Then we use the following well known inequalities
\begin{equation}\label{def:ineq:Weyl-Horn}
\sum_j \varphi(|\lambda_j(A_1A_2)|)\leq \sum_j \varphi(\sigma_j(A_1A_2))\leq \sum_j \varphi(\sigma_j(A_1)\sigma_j(A_2))
\end{equation}
for any compact operators $A_1$ and $A_2$, and every function $\varphi$ that is non-negative monotone increasing function on $[0,\infty)$ so that  $t\mapsto\varphi(e^t)$ is convex. Here $\sigma_1(\cdot)\geq \sigma_2(\cdot)\geq\cdots\geq \sigma_{|\Lambda|}(\cdot)$ denote the singular values in decreasing order. In (\ref{def:ineq:Weyl-Horn}), the first inequality is due to Weyl \cite{Weyl}, see also \cite[Thm 1.15]{Simon}, and the second is Horn inequality \cite{Horn},

Apply inequality (\ref{def:ineq:Weyl-Horn}) in (\ref{eq:LN:2}) with $A_1=M^{(2)}_{t,\beta} \widetilde{\mathbb{P}}\Pi_J(M^{(1)}_{t,\beta})$, $A_2=\Pi_J(M^{(2)}_{t,\beta}) \widetilde{\mathbb{P}} M^{(1)}_{t,\beta}$, and $\varphi=\phi$ to obtain
\begin{eqnarray}
4\mathcal{N}(\varrho_{t,\beta})
&\leq& \sum_j\phi\left(\sigma_j\left(M^{(2)}_{t,\beta} \widetilde{\mathbb{P}}
\Pi_J(M^{(1)}_{t,\beta})\right)\sigma_j\left(\Pi_J(M^{(2)}_{t,\beta}) \widetilde{\mathbb{P}} M^{(1)}_{t,\beta}\right)\right)\nonumber \\
&=& 2 \sum_j\phi\left(\sigma_j\left(M^{(2)}_{t,\beta} \widetilde{\mathbb{P}}
\Pi_J(M^{(1)}_{t,\beta})\right)\right)
\end{eqnarray}
where in the last step we used the fact
$
\Pi_J(M^{(2)}_{t,\beta}) \widetilde{\mathbb{P}} M^{(1)}_{t,\beta}=\Pi_J\left(
M^{(2)}_{t,\beta} \widetilde{\mathbb{P}}
\Pi_J(M^{(1)}_{t,\beta})
\right)
$
meaning that $M^{(2)}_{t,\beta} \widetilde{\mathbb{P}}
\Pi_J(M^{(1)}_{t,\beta})$ and $\Pi_J(M^{(2)}_{t,\beta}) \widetilde{\mathbb{P}} M^{(1)}_{t,\beta}$ have the same singular values. We also used the fact that $\phi(x^2)=2\phi(x)$ for all $x\geq 0$.

Next we rewrite the operator $M^{(2)}_{t,\beta}\widetilde{\mathbb{P}} \Pi_J(M^{(1)}_{t,\beta})$ as 
\begin{equation}
M^{(2)}_{t,\beta}\widetilde{\mathbb{P}} \Pi_J(M^{(1)}_{t,\beta})=[M^{(2)}_{t,\beta},\widetilde{\mathbb{P}}]\Pi_J(M^{(1)}_{t,\beta})+
\widetilde{\mathbb{P}} M^{(2)}_{t,\beta} \Pi_J(M^{(1)}_{t,\beta})
\end{equation}
and we use Fan inequality \cite{Fan51} (see also \cite[Thm 1.7]{Simon}): for any $n\geq 0$ and any compact operators $A_1, A_2$.
\begin{equation}
 \sigma_{n+1}(A_1+A_2)\leq  \sigma_{n+1}(A_1)+ \sigma_1(A_2), 
 \end{equation}
to obtain,
\begin{equation}\label{eq:LN-bound-5}
2\mathcal{N}(\varrho_{t,\beta})\leq \sum_j \phi\left(\sigma_j\left([M^{(2)}_{t,\beta},\widetilde{\mathbb{P}}]\Pi_J(M^{(1)}_{t,\beta})\right)+\sigma_1\left(\widetilde{\mathbb{P}} M^{(2)}_{t,\beta} \Pi_J(M^{(1)}_{t,\beta})\right)\right).
\end{equation}
Here
\begin{equation}
M^{(2)}_{t,\beta} \Pi_J(M^{(1)}_{t,\beta})=\begin{bmatrix}
                 \idty_\Lambda & 0 \\
                 0 & \bigoplus_{m=1}^M \tanh^2(\beta_m h_{\Lambda_m}^{1/2}) \\
               \end{bmatrix}\leq \idty_\Lambda^{\oplus 2},
\end{equation}
hence, 
\begin{equation}\label{eq:s1}
\sigma_1(\widetilde{\mathbb{P}} M^{(2)}_{t,\beta} \Pi_J(M^{(1)}_{t,\beta}))=\left\| M^{(2)}_{t,\beta} \Pi_J(M^{(1)}_{t,\beta})
\right\|\leq 1.
\end{equation}
Moreover, note that $\phi$ is increasing and satisfies 
\begin{equation}\label{eq:log-bound}
\phi\big(x+1\big)=\log(x+1)\leq \frac{1}{\alpha}\log(x^\alpha+1)\leq  \frac{1}{\alpha} x^\alpha
\end{equation}
for every $x\geq 0$ and   $\alpha\in(0,1]$. Now, use (\ref{eq:s1}) and (\ref{eq:log-bound}) in (\ref{eq:LN-bound-5}),
\begin{equation}
\mathcal{N}(\varrho_{t,\beta})\leq \frac{1}{2 \alpha} \sum_j \sigma_j^\alpha\left([M^{(2)}_{t,\beta},\widetilde{\mathbb{P}}]\Pi_J(M^{(1)}_{t,\beta})\right),
\end{equation}
which shows the bound (\ref{eq:LN:upperbound-0}) and finishes the proof of the lemma.
\end{proof}

\section{An Area Law}\label{sec:AreaLaw}

The starting point for the proof of an area law will be the bound (\ref{eq:LN:upperbound-0}) in Lemma \ref{lem:LN:upperbound}. Take the supremum over time $t\in\R$ and inverse temperatures $\beta\in(0,\infty]^M$, then average the disorder to obtain the bound
\begin{equation}\label{eq:LN-bound-E}
\mathbb{E}\left(\sup_{t, \beta} \mathcal{N}(\varrho_{t,\beta})\right)\leq \frac{1}{\alpha}\sum_{i,j=1}^2\mathbb{E}\left(\sup_{t,\beta}\left\|\left([M^{(2)}_{t,\beta},\widetilde{\mathbb{P}}]\Pi_{J}(M^{(1)}_{t,\beta})\right)_{i,j}\right\|_\alpha^\alpha\right), \text{ for any }\alpha\in(0,1].
\end{equation}

To prove the main result (Theorem \ref{thm:main-result})  we need to list the explicit formulas for the four $|\Lambda|\times|\Lambda|$ blocks of $[M^{(2)}_{t,\beta},\widetilde{\mathbb{P}}]\Pi_{J}(M^{(1)}_{t,\beta})$, in which we use the 
 following shorts
\begin{equation}\label{def:Ds}
\mathcal{D}^+_\beta:=\bigoplus_{m=1}^M h_{\Lambda_m}^{\frac{1}{2}}\tanh(\beta_m h_{\Lambda_m}^{1/2}) \text{ \quad   and \quad}
\mathcal{D}^-_\beta:=\bigoplus_{m=1}^M h_{\Lambda_m}^{-\frac{1}{2}}\tanh(\beta_m h_{\Lambda_m}^{1/2})
\end{equation}
 for the  direct sums appearing in the formulas of $M^{(1)}_{t,\beta}$ and $M^{(2)}_{t,\beta}$ in (\ref{def:M1}) and (\ref{def:M2}), respectively; and we note that since $|\tanh(x)|\leq 1$ the operator $\mathcal{D}^+_\beta$  has the following (deterministic) bound
  \begin{equation}\label{eq:D+bound}
 \|\mathcal{D}^+_\beta\| \leq \max_{m} \|h_{\Lambda_m}^{1/2}\|\leq \|h_\Lambda^{1/2}\|\leq \mathcal{C}_h=\sqrt{4d+k_{\max}} \text{ see (\ref{eq:h-bound}).}
 \end{equation}
 
 While $\mathcal{D}^-_\beta$ is not uniformly bounded in the disorder, the volume of the system, or the inverse temperature;  the next lemma shows that $\mathcal{D}^-_\beta h_\Lambda^{1/2}$ is deterministically bounded by a constant that depends only on the maximal degree of the dual graph associated with the initial decomposition of the system, see also Remark \ref{rem:beta}.
 
Note that  $\mathcal{D}^-_\beta h_\Lambda^{1/2}$ encapsulates the main features of the dual graph $\mathcal{G}_M=(\mathcal{V}_M,\mathcal{E}_M)$ defined in (\ref{def:Graph}) and (\ref{def:Graph-VE}): The decomposition  of $\Lambda$ (\ref{def:partition1}), associated with $\mathcal{V}_M$, is reflected in $\mathcal{D}^-$ and the interaction between the corresponding systems, associated with the edges $\mathcal{E}_M$, is described through $h_\Lambda$.

 \begin{lem}\label{lem:bound}
For  the $d$-dimensional finite volume Anderson model $h_\Lambda$ defined in (\ref{eq:elements-h-Lambda}), and the direct sum $\mathcal{D}^-_\beta$ given in (\ref{def:Ds}), we have the bound
\begin{equation}\label{eq:D-h-bound}
\left\|\mathcal{D}^-_\beta h_\Lambda^{1/2}\right\|\leq \sqrt{\Delta(\mathcal{G}_M)+1}
\end{equation}
where $\Delta(\mathcal{G}_M)$ is the maximum degree of the graph $\mathcal{G}_M$ in (\ref{def:Graph}), and it is given by the formula (\ref{def:R-2}).
\end{lem}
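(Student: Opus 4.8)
The plan is to first strip off the temperature-dependent factors and reduce the claim to a purely deterministic comparison between $h_\Lambda$ and the block-diagonal ``decoupled'' operator $\tilde h:=\bigoplus_{m=1}^M h_{\Lambda_m}$. Since $|\tanh|\le 1$, one has the operator inequality $(\mathcal{D}^-_\beta)^2=\bigoplus_m h_{\Lambda_m}^{-1}\tanh^2(\beta_m h_{\Lambda_m}^{1/2})\le \bigoplus_m h_{\Lambda_m}^{-1}=\tilde h^{-1}$, uniformly in $\beta$. Conjugation by $h_\Lambda^{1/2}$ preserves operator order, so using $\|A^*A\|=\|AA^*\|$ twice,
\[
\left\|\mathcal{D}^-_\beta h_\Lambda^{1/2}\right\|^2=\left\|h_\Lambda^{1/2}(\mathcal{D}^-_\beta)^2 h_\Lambda^{1/2}\right\|\le \left\|h_\Lambda^{1/2}\tilde h^{-1}h_\Lambda^{1/2}\right\|=\left\|\tilde h^{-1/2}h_\Lambda\tilde h^{-1/2}\right\|.
\]
Thus it suffices to establish the $\beta$-free inequality $\|\tilde h^{-1/2}h_\Lambda\tilde h^{-1/2}\|\le \Delta(\mathcal G_M)+1$, equivalently the operator inequality $h_\Lambda\le(\Delta(\mathcal G_M)+1)\,\tilde h$. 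Remark \ref{rem:beta} is exactly the point being made here: we never invoke $\tanh(\beta_m x)\le\beta_m x$, only $\tanh^2\le 1$.

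Next I would exploit the block structure induced by the tiling. Decompose $\ell^2(\Lambda)=\bigoplus_{m=1}^M \ell^2(\Lambda_m)$ and read $h_\Lambda$ as an $M\times M$ operator matrix. By construction the diagonal block of $h_\Lambda$ on $\Lambda_m$ is exactly the subsystem Anderson Hamiltonian $h_{\Lambda_m}$ (the same on-site terms $2d+k_x$ from (\ref{eq:elements-h-Lambda}) and the same internal nearest-neighbor hoppings), while the off-diagonal block $C_{m,m'}:=(h_\Lambda)_{m,m'}$ collects precisely the $-1$ couplings along the nearest-neighbor edges that the decomposition cuts between $\Lambda_m$ and $\Lambda_{m'}$. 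Hence $C_{m,m'}\ne 0$ only when $d_\Lambda(\Lambda_m,\Lambda_{m'})=1$, i.e.\ only for the at most $\Delta(\mathcal G_M)$ neighbors of $\Lambda_m$ in the dual graph. After conjugation the diagonal blocks become identities, so
\[
\tilde h^{-1/2}h_\Lambda\tilde h^{-1/2}=\idty_\Lambda+\mathcal R,\qquad \mathcal R_{m,m'}=h_{\Lambda_m}^{-1/2}C_{m,m'}h_{\Lambda_{m'}}^{-1/2},
\]
where $\mathcal R$ is self-adjoint with vanishing diagonal blocks and at most $\Delta(\mathcal G_M)$ nonzero blocks in each row and column.

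The crucial step is to bound every coupling block by one. For neighboring $m\ne m'$ the compression of $h_\Lambda$ to $\ell^2(\Lambda_m)\oplus\ell^2(\Lambda_{m'})$ is the principal submatrix $\bigl(\begin{smallmatrix} h_{\Lambda_m} & C_{m,m'}\\ C_{m,m'}^* & h_{\Lambda_{m'}}\end{smallmatrix}\bigr)$, which is positive semidefinite because $h_\Lambda\ge 0$. The standard $2\times2$ operator-matrix lemma then gives $\|\mathcal R_{m,m'}\|=\|h_{\Lambda_m}^{-1/2}C_{m,m'}h_{\Lambda_{m'}}^{-1/2}\|\le 1$: conjugating by $h_{\Lambda_m}^{-1/2}\oplus h_{\Lambda_{m'}}^{-1/2}$ reduces positivity to $\bigl(\begin{smallmatrix}\idty & K\\ K^* & \idty\end{smallmatrix}\bigr)\ge 0$, which holds iff $\|K\|\le 1$. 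Since the scalar matrix $\bigl(\|\mathcal R_{m,m'}\|\bigr)_{m,m'}$ is symmetric, nonnegative, has entries $\le 1$ and row sums $\le \Delta(\mathcal G_M)$, the Schur test (equivalently the ``spectral radius $\le$ maximal absolute row sum'' bound for symmetric nonnegative matrices) yields $\|\mathcal R\|\le \Delta(\mathcal G_M)$. As $\idty_\Lambda+\mathcal R\ge 0$, its norm equals $1+\lambda_{\max}(\mathcal R)\le 1+\Delta(\mathcal G_M)$, and combining with the first display gives $\|\mathcal D^-_\beta h_\Lambda^{1/2}\|\le\sqrt{\Delta(\mathcal G_M)+1}$.

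The main obstacle I anticipate is precisely this per-block estimate: the factors $h_{\Lambda_m}^{-1/2}$ are not uniformly bounded in the volume or the disorder, so one cannot bound $\|\mathcal R_{m,m'}\|$ by multiplying the norms of its three factors. The whole point is to keep the two singular factors attached to the single positive operator $h_\Lambda$ through its principal submatrices; this is what makes the bound deterministic and independent of $\min_x k_x$. A secondary point to handle carefully is the bookkeeping that passes from the operator norm of the block operator $\mathcal R$ to the scalar adjacency/Schur bound, together with the identification of the diagonal blocks of $h_\Lambda$ with $h_{\Lambda_m}$, which rests on the on-site convention in (\ref{eq:elements-h-Lambda}).
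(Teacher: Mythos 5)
Your proof is correct and follows essentially the same route as the paper's: reduce to the deterministic bound on $Z=\bigl(\bigoplus_m h_{\Lambda_m}^{-1/2}\bigr)h_\Lambda\bigl(\bigoplus_m h_{\Lambda_m}^{-1/2}\bigr)$, show each off-diagonal block has norm at most one via positivity of the $2\times2$ compression and a Schur complement, and then bound the off-diagonal block matrix by the maximal (block-)row sum $\Delta(\mathcal{G}_M)$. The only cosmetic differences are that the paper strips the temperature dependence by factoring out $\bigoplus_m\tanh(\beta_m h_{\Lambda_m}^{1/2})$ rather than using the operator inequality $(\mathcal{D}^-_\beta)^2\le\bigoplus_m h_{\Lambda_m}^{-1}$, and it writes out explicitly the Cauchy--Schwarz computation that underlies your Schur-test step.
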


\begin{proof}
We bound the norm
\begin{equation}
\left\|\mathcal{D}^-_\beta h_\Lambda \mathcal{D}^-_\beta \right\|= \left\|\mathcal{D}^-_\beta h_\Lambda^{1/2}\right\|^2.
\end{equation}
First, note that $\mathcal{D}^-_\beta$ can be written as
\begin{equation}
\mathcal{D}^-_\beta=\left(\bigoplus_{m}\tanh(\beta_m h_{\Lambda_m}^{1/2})\right)
\left(\bigoplus_m h_{\Lambda_m}^{-1/2}\right).
\end{equation}
Then since $|\tanh(x)|\leq 1$ for all $x\in\R$, then it is enough to show that
\begin{equation}
\left\|Z\right\|\leq \Delta(\mathcal{G}_M)+1, \text{ where } Z:=\left(\bigoplus_m h_{\Lambda_m}^{-1/2}\right)\ h_{\Lambda}\left(\bigoplus_m h_{\Lambda_m}^{-1/2}\right).
\end{equation}
Next we write $h_\Lambda$ in the basis $\bigoplus_m\ell^2(\Lambda_m)$ that corresponds to the graph $\mathcal{G}_M=(\mathcal{V}_M,\mathcal{E}_M)$ in (\ref{def:Graph}), we recall here that
\begin{equation}
\mathcal{E}_M=\big\{\{\Lambda_i,\Lambda_j\};\ d_\Lambda(\Lambda_i,\Lambda_j)=1\big\},
\end{equation}
 i.e., we understand $h_\Lambda$ as $M\times M$-block matrix, with $(h_\Lambda)_{i,j}=\iota_{\Lambda_i}^*h_\Lambda\iota_{\Lambda_j}\in\R^{|\Lambda_i|\times|\Lambda_j|}$ where $\iota_{\Lambda_i}$ is the canonical embedding $\iota_{\Lambda_i}:\R^{|\Lambda_i|}\rightarrow\R^{|\Lambda|}$. Note that $\iota_{\Lambda_j}^*h_\Lambda\iota_{\Lambda_j}=h_{\Lambda_j}$ for $j=1,\ldots,M$, and  that $(h_\Lambda)_{i,j}=0$ if $\{\Lambda_j,\Lambda_k\}\not\in \mathcal{E}_M$, i.e., if $d_\Lambda (\Lambda_j,\Lambda_k)\neq 1$. Meaning that,
\begin{equation}
h_\Lambda=\bigoplus_{m=1}^M h_{\Lambda_m}
+
\sum_{{\tiny \begin{array}{c}
i,j=1,\ldots,M\\
\{\Lambda_i,\Lambda_j\}\in\mathcal{E}_M\\
\end{array}}} E_{ij}\otimes (h_\Lambda)_{i,j},
\end{equation}
where $E_{ij}$ is the $M\times M$ matrix with 1 in the $ij$-th entry and zeros elsewhere.
Multiply $h_\Lambda$ from right and left by $\bigoplus_m h_{\Lambda_m}^{-1/2}$
to obtain the following formula for $Z$,
\begin{equation}\label{eq:Z:1}
Z-\idty_\Lambda=\sum_{{\tiny \begin{array}{c}
i,j=1,\ldots,M\\
\{\Lambda_i,\Lambda_j\}\in\mathcal{E}_M\\
\end{array}}} E_{ij}\otimes h_{\Lambda_i}^{-1/2} (h_\Lambda)_{i,j} h_{\Lambda_j}^{-1/2}=\sum_{{\tiny \begin{array}{c}
i,j=1,\ldots,M\\
\{\Lambda_i,\Lambda_j\}\in\mathcal{E}_M\\
\end{array}}} E_{ij}\otimes (Z)_{i,j}.
\end{equation}
We next show that
\begin{equation}\label{eq:norm-bound-2}
\Big\|Z-\idty_\Lambda \Big\|
\leq \Delta(\mathcal{G}_M)
\end{equation}
to get the desired bound (\ref{eq:D-h-bound}).

Since $h_\Lambda\geq 0$ almost surely,  then $Z\geq 0$ (almost surely) by construction, and hence for every $1\leq  r < k\leq M$ such that $\{\Lambda_r,\Lambda_k\}\in\mathcal{E}_M$, we have
\begin{equation}
(\iota_{\Lambda_r}^*\oplus\iota_{\Lambda_k}^*) Z (\iota_{\Lambda_r}\oplus\iota_{\Lambda_k}) \geq 0,
\end{equation}
meaning that
\begin{equation}
\begin{pmatrix}
\idty_{\Lambda_{r}} &(Z)_{r,k} \\
 (Z)_{r,k}^T         & \idty_{\Lambda_{k}}
\end{pmatrix} \geq 0.
\end{equation}
The latter is satisfied if and only if the corresponding  Schur complement is nonnegative, i.e.,
\begin{equation}
\idty_{\Lambda_r}- (Z)_{r,k} (Z)_{r,k}^T\geq 0.
\end{equation}
This proves that the norm of every off-diagonal block in $Z$ is bounded by $1$, i.e.,  
\begin{equation}
\big\|(Z)_{r,k}\big\|=\left\|h_{\Lambda_r}^{-1/2}(h_\Lambda)_{r,k}h_{\Lambda_k}^{-1/2}\right\| \leq 1, \quad\text{ for every } 1\leq  r , k\leq M;\ \{\Lambda_r,\Lambda_k\}\in\mathcal{E}_M.
\end{equation}
To summarize, the blocks of $Z$ satisfy
\begin{equation}
\left\|(Z)_{i,j}\right\|\ 
\begin{cases}
= 1 & \text{if } i=j\\
\leq 1 & \text{if }\{\Lambda_i,\Lambda_j\}\in\mathcal{E}_M\\
=0 & i\neq j \text{ and } \{\Lambda_i,\Lambda_j\}\not\in\mathcal{E}_M
\end{cases}.
\end{equation}
That the norm bound (\ref{eq:norm-bound-2}) is satisfied, follows then directly by observing that the block matrix inside the norm in (\ref{eq:norm-bound-2}) has nonzero entries only in at most $\Delta(\mathcal{G}_M)$ blocks in any (block-) row or column, where the norm of each block-nonzero-entry is less than or equal to one. Here we give a detailed proof. 

We write any given unit vector $x\in\R^{|\Lambda|}$ as
\begin{equation}
x=\sum_{m=1}^M e_m\otimes (x)_m
\end{equation}
where $(x)_m\in\R^{|\Lambda_m|}$ is the restriction of $x$ to  $\ell^2(\Lambda_m)$, i.e., $(x)_m:=\iota_{\Lambda_m}^* x$. Note that 
$
1=\|x\|^2=\sum_m\|(x)_m\|^2$.
It is easy to see that, using (\ref{eq:Z:1})
 \begin{equation}
 \langle x,(Z-\idty_\Lambda)x\rangle=
 \sum_{
 {\tiny
\begin{array}{c}
i,j \in\{1,\ldots,M\}\\
\{\Lambda_i,\Lambda_j\}\in\mathcal{E}_M
\end{array}}
 }
 \left\langle (x)_i,(Z)_{i,j}(x)_j\right\rangle.
 \end{equation}
 Thus,
 \begin{equation}\label{eq:norm-bound-3}
\left| \langle x,(Z-\idty_\Lambda)x\rangle\right|
 \leq
 \Big(\sum_{
 {\tiny
\begin{array}{c}
i,j\\
\{\Lambda_i,\Lambda_j\}\in\mathcal{E}_M
\end{array}}} \|(x)_i\|^2 \|(Z)_{i,j}\| \Big)^{\frac12}
 \Big(\sum_{
 {\tiny
\begin{array}{c}
i,j\\
\{\Lambda_i,\Lambda_j\}\in\mathcal{E}_M
\end{array}}}  \|(Z)_{i,j}\|\|(x)_j\|^2 \Big)^{\frac12}.
 \end{equation}
We then observe that
\begin{eqnarray}\label{eq:norm-bound-4}
\sum_{
 {\tiny
\begin{array}{c}
i,j\\
\{\Lambda_i,\Lambda_j\}\in\mathcal{E}_M
\end{array}}} \|(x)_i\|^2 \|(Z)_{i,j}\|
&\leq&
  \max_{i}\sum_{
j,\ 
\{\Lambda_i,\Lambda_j\}\in\mathcal{E}_M}\|(Z)_{i,j}\| \nonumber\\
&\leq&
\max_i\left|\{j;\ \{\Lambda_i,\Lambda_j\}\in\mathcal{E}_M\}\right| \nonumber\\
&=& \Delta(\mathcal{G}_M).
\end{eqnarray}
Similarly, we obtain
\begin{equation}\label{eq:norm-bound-5}
\sum_{
 {\tiny
\begin{array}{c}
i,j\\
\{\Lambda_i,\Lambda_j\}\in\mathcal{E}_M
\end{array}}}  \|(Z)_{i,j}\|\|(x)_j\|^2 
\leq
\max_j\left|\{i;\ \{\Lambda_i,\Lambda_j\}\in\mathcal{E}_M\}\right| 
= \Delta(\mathcal{G}_M).
\end{equation}
Substitute the bounds (\ref{eq:norm-bound-4}) and (\ref{eq:norm-bound-5}) in (\ref{eq:norm-bound-3}) to get the norm bound (\ref{eq:norm-bound-2}), which finishes the proof.

\end{proof}

We are now ready to finish the proof of the main theorem (\ref{thm:main-result}) starting with the entanglement bound (\ref{eq:LN:upperbound-0}).
A direct calculation using (\ref{def:M1}), (\ref{def:M2}), and (\ref{def:Tt-inv}) results the four blocks of  $[M^{(2)}_{t,\beta},\widetilde{\mathbb{P}}]\Pi_{J}(M^{(1)}_{t,\beta})$, that we will discuss one by one below. \\
\noindent\underline{The $(\cdot)_{1,1}$-block term:}
 \begin{eqnarray}\label{eq:4-blocks}
 \left(\cdot\right)_{1,1}&=&
 \left[\cos(2th_{\Lambda}^{1/2}),\mathbb{P}\right]\cos(2th_{\Lambda}^{1/2})+ \nonumber\\
&&\hspace{5cm}
+\left[h_{\Lambda}^{-1/2}\sin(2th_{\Lambda}^{1/2}),\mathbb{P}\right] h_{\Lambda}^{1/2} \sin(2th_{\Lambda}^{1/2}). 
 \end{eqnarray}
 We apply the inequality for the $\alpha$-quasi norms, see e.g., \cite[Thm 7.8]{Weidmann}
\begin{equation}\label{eq:alpha-norm-bound-0}
\|A_1+A_2\|_\alpha^\alpha\leq 2\|A_1\|_\alpha^\alpha+ 2\|A_2\|_\alpha^\alpha.
\end{equation}
Then we use the inequality
\begin{equation}\label{eq:alpha-norm-bound-1}
 \|A_1 A_2\|_\alpha\leq \|A_1\|_\alpha\ \|A_2\|
 \end{equation}
 that follows from the well known inequality $\sigma_j(A_1A_2)\leq \sigma_j(A_1)\|A_2\|$, see e.g., \cite{Bhatia},
to find that
\begin{equation}\label{eq:11-term-alpha-norm}
\left\|\left([M^{(2)}_{t,\beta},\widetilde{\mathbb{P}}]\Pi_{J}(M^{(1)}_{t,\beta})\right)_{1,1}\right\|_\alpha^\alpha\leq 2 \left\|\left[\cos(2th_{\Lambda}^{1/2}),\mathbb{P}\right]\right\|_\alpha^\alpha+2 \mathcal{C}_h^\alpha \left\|\left[h_{\Lambda}^{-1/2}\sin(2th_{\Lambda}^{1/2}),\mathbb{P}\right] \right\|_\alpha^\alpha.
\end{equation}
 $\|A\|_\alpha^\alpha$ can be bounded by the $\alpha$-power of the absolute values of its elements in any basis, see, e.g, the  proof of Lemma 2.1 in \cite{BSW19}, 
\begin{equation}\label{eq:alpha-norm-bound-2}
\|A\|_\alpha^\alpha\leq \sum_{j,k}|A_{ij}|^\alpha, \text{ for any }\alpha\in(0,1].
\end{equation}
We further note that for any operator $A$ acting on $\mathcal{H}_\Lambda$
\begin{equation}\label{eq:off-blocks}
|\langle\delta_x, [A,\mathbb{P}]\delta_y\rangle|=
\begin{cases}
0 & \text{if } x,y\in \Lambda_0 \text{ or } x,y\in\Lambda\setminus\Lambda_0\\
2|\langle\delta_x, A\delta_y\rangle| & \text{otherwise}
\end{cases}.
\end{equation}
By taking the supremum over $t$ and $\beta$ then averaging the disorder in (\ref{eq:11-term-alpha-norm}), (\ref{eq:off-blocks}) and (\ref{eq:alpha-norm-bound-2}) give
\begin{eqnarray}\label{eq:11-term-E}
\mathbb{E}\left(\sup_{t,\beta}\left\|\left(\cdot\right)_{1,1}\right\|_\alpha^\alpha\right)\nonumber
&\leq& 
8 \sum_{{\tiny
\left.\begin{array}{c}
 x\in\Lambda_0
 \\
 y\in\Lambda\setminus\Lambda_0
 \end{array}\right.
}}
\mathbb{E}\left(\sup_{|u|\leq 1}\left|\langle\delta_x, u(h_\Lambda)\delta_y\rangle\right|^{\alpha}\right)+\\
&&
\hspace{2cm}
 +8 \mathcal{C}_h^\alpha 
\sum_{{\tiny
\left.\begin{array}{c}
 x\in\Lambda_0
 \\
 y\in\Lambda\setminus\Lambda_0
 \end{array}\right.
}}
\mathbb{E}\left(\sup_{|u|\leq 1}\left|\langle\delta_x, h_{\Lambda}^{-1/2}u(h_\Lambda)\delta_y\rangle\right|^{\alpha}\right).
\end{eqnarray}
Here the operators $\sin(2th_\Lambda^{1/2})$ and $\cos(2t h_\Lambda^{1/2})$ are absorbed in $u(h_\Lambda)$.

We assume that
\begin{equation}\label{eq:alpha-s-1}
\alpha\in(0,s], \text{i.e., }0<\frac{\alpha}{s}\leq 1
\end{equation}
to use Jensen's inequality $\mathbb{E}(\cdot^{\alpha/s})\leq \mathbb{E}(\cdot)^{\alpha/s}$, and the eigenfunction correlators (\ref{def:EFC}) to bound the second sum in (\ref{eq:11-term-E})
\begin{eqnarray}\label{eq:11-term-2}
\sum_{{\tiny
\left.\begin{array}{c}
 x\in\Lambda_0
 \\
 y\in\Lambda\setminus\Lambda_0
 \end{array}\right.
}}
\mathbb{E}\left(\sup_{|u|\leq 1}\left|\langle\delta_x, h_{\Lambda}^{-1/2}u(h_\Lambda)\delta_y\rangle\right|^{\alpha}\right)&\leq& 
\sum_{{\tiny
\left.\begin{array}{c}
 x\in\Lambda_0
 \\
 y\in\Lambda\setminus\Lambda_0
 \end{array}\right.
}}
\mathbb{E}\left(\sup_{|u|\leq 1}\left|\langle\delta_x, h_{\Lambda}^{-1/2}u(h_\Lambda)\delta_y\rangle\right|^s\right)^{\alpha/s} \nonumber
\\
&\leq&  C^{\alpha/s} \sum_{{\tiny
\left.\begin{array}{c}
 x\in\Lambda_0
 \\
 y\in\Lambda\setminus\Lambda_0
 \end{array}\right.
}} e^{-\frac{\alpha}{s}\eta|x-y|}\nonumber
\\
&\leq& C^{\alpha/s}\left(\sum_{x\in\mathbb{Z}^d}e^{-\frac{\alpha}{s}\eta|x|}\right)^2|\partial\Lambda_0|.
\end{eqnarray}
In the second-to-third step we used the following argument: For each $x\in\Lambda_0$ and $y\in\Lambda\setminus\Lambda_0$ there exists at least one $z\in\partial\Lambda_0$ such that $|x-y|=|x-z|+|y-z|$, then
\begin{eqnarray}
\sum_{{\tiny
\left.\begin{array}{c}
 x\in\Lambda_0
 \\
 y\in\Lambda\setminus\Lambda_0
 \end{array}\right.
}} e^{-\frac{\alpha}{s}\eta|x-y|}
&\leq& \sum_{z\in\partial\Lambda_0} \sum_{
{\tiny
\begin{array}{c}
 x\in\Lambda_0,\ y\in\Lambda\setminus\Lambda_0
 \\
 |x-y|=|x-z|+|y-z|
 \end{array}
}} e^{-\frac{\alpha}{s}\eta|x-z|}\ e^{-\frac{\alpha}{s}\eta|y-z|} \nonumber\\
&\leq& \sum_{z\in\partial\Lambda_0} \left(\sum_{x\in\mathbb{Z}^d}e^{-\frac{\alpha}{s}\eta|x-z|}\right)^2.
\end{eqnarray}

Similarly, the first sum in the right hand side of (\ref{eq:11-term-E})  scales like $\mathcal{O}(|\partial\Lambda_0|)$, here we use the eigenfunction correlators (\ref{def:EFC-identity}),
\begin{eqnarray}\label{eq:11-term-1}
\sum_{{\tiny
\left.\begin{array}{c}
 x\in\Lambda_0
 \\
 y\in\Lambda\setminus\Lambda_0
 \end{array}\right.
}}
\mathbb{E}\left(\sup_{|u|\leq 1}\left|\langle\delta_x, u(h_\Lambda)\delta_y\rangle\right|^{\alpha}\right)&\leq& \sum_{{\tiny
\left.\begin{array}{c}
 x\in\Lambda_0
 \\
 y\in\Lambda\setminus\Lambda_0
 \end{array}\right.
}}
\mathbb{E}\left(\sup_{|u|\leq 1}\left|\langle\delta_x, u(h_\Lambda)\delta_y\rangle\right|^s\right)^{\alpha/s} \nonumber\\
&\leq& \mathcal{C}_h^{\alpha} C^{\alpha/s} \sum_{{\tiny
\left.\begin{array}{c}
 x\in\Lambda_0
 \\
 y\in\Lambda\setminus\Lambda_0
 \end{array}\right.
}} e^{-\frac{\alpha}{s}\eta|x-y|}\nonumber
\\
&\leq& \mathcal{C}_h^\alpha C^{\alpha/s}\left(\sum_{x\in\mathbb{Z}^d}e^{-\frac{\alpha}{s}\eta|x|}\right)^2|\partial\Lambda_0|.
\end{eqnarray}
Substitute (\ref{eq:11-term-1}) and (\ref{eq:11-term-2}) in (\ref{eq:11-term-E}) to find,
\begin{equation}\label{eq:11-term-F}
\mathbb{E}\left(\sup_{t,\beta}\left\|\left([M^{(2)}_{t,\beta},\widetilde{\mathbb{P}}]\Pi_{J}(M^{(1)}_{t,\beta})\right)_{1,1}\right\|_\alpha^\alpha\right)
\leq 16\mathcal{C}_h^\alpha C^{\alpha/s}\left(\sum_{x\in\mathbb{Z}^d}e^{-\frac{\alpha}{s}\eta|x|}\right)^2|\partial\Lambda_0|.
\end{equation}
\noindent\underline{The $(\cdot)_{1,2}$-block term:}
 \begin{eqnarray}
 \left(\cdot\right)_{1,2}&=& \left[\cos(2th_{\Lambda}^{1/2}),\mathbb{P} \right]h_{\Lambda}^{-1/2}\sin(2th_{\Lambda}^{1/2}) \mathcal{D}^+_\beta +\nonumber\\
 && \hspace{5cm}
 +(-1)\left[h_{\Lambda}^{-1/2}\sin(2th_{\Lambda}^{1/2}) ,\mathbb{P}\right] \cos(2th_{\Lambda}^{1/2}) \mathcal{D}^+_\beta.
  \end{eqnarray}
We will follow the same steps in bounding the $(\cdot)_{1,1}$ term. So we need to have only bounded terms outside of the commutators (with $\mathbb{P}$), so we push $h_\Lambda^{-1/2}$ in the first term of $(\cdot)_{1,2}$ inside the commutator to obtain the equivalent formula
\begin{eqnarray}\label{eq:12-term-0}
 (\cdot)_{1,2}&=&\left(\left[\cos(2th_{\Lambda}^{1/2})h_{\Lambda}^{-1/2},\mathbb{P} \right]-
 \cos(2th_{\Lambda}^{1/2})\left[h_{\Lambda}^{-1/2},\mathbb{P} \right]\right)\sin(2th_{\Lambda}^{1/2}) \mathcal{D}^+_\beta +\nonumber\\
 && \hspace{6.5cm} -\left[h_{\Lambda}^{-1/2}\sin(2th_{\Lambda}^{1/2}) ,\mathbb{P}\right] \cos(2th_{\Lambda}^{1/2}) \mathcal{D}^+_\beta.
 \end{eqnarray}
We proceed as in the $(\cdot)_{1,1}$ block-term case, i.e., we take the $\alpha$-quasi norm as in (\ref{eq:11-term-alpha-norm}) where we use $\|\mathcal{D}^+_\beta\|\leq \mathcal{C}_h$, then we use (\ref{eq:alpha-norm-bound-0}) iteratively for three terms.
We also
apply (\ref{eq:off-blocks}), then we finally take the supremum over $t\in\R$ and $\beta\in(0,\infty]^M$, and average to land in the bound
\begin{equation}\label{eq:12-term-F}
\mathbb{E}\left(\sup_{t,\beta}\left\|\left([M^{(2)}_{t,\beta},\widetilde{\mathbb{P}}]\Pi_{J}(M^{(1)}_{t,\beta})\right)_{1,2}\right\|_\alpha^\alpha\right)
\leq 40\mathcal{C}_h^\alpha C^{\alpha/s}\left(\sum_{x\in\mathbb{Z}^d}e^{-\frac{\alpha}{s}\eta|x|}\right)^2|\partial\Lambda_0|.
\end{equation}

\noindent\underline{The $(\cdot)_{2,1}$-block term:}
\begin{eqnarray}\label{eq:21-term-0}
  \left(\cdot\right)_{2,1}&=& \left[\mathcal{D}^-_\beta h_\Lambda^{1/2}\sin(2th_{\Lambda}^{1/2}) ,\mathbb{P}\right]\cos(2th_{\Lambda}^{1/2}) +\nonumber \\
  &&\hspace{5cm}
+(-1)\left[\mathcal{D}^-_\beta \cos(2th_{\Lambda}^{1/2}),\mathbb{P} \right]  h_\Lambda^{1/2} \sin(2th_{\Lambda}^{1/2}) .
  \end{eqnarray}
Here, we have the extra (not uniformly bounded) operator $\mathcal{D}^-_\beta$ inside both commutators. After taking the Schatten $\alpha$-quasi-norm to both sides of (\ref{eq:21-term-0}), we consider the supremum over $t$ and $\beta$, to obtain
\begin{eqnarray}
\sup_{t,\beta}\left\|\left([M^{(2)}_{t,\beta},\widetilde{\mathbb{P}}]\Pi_{J}(M^{(1)}_{t,\beta})\right)_{2,1}\right\|_\alpha^\alpha
&\leq& 
16\mathcal{C}_h^\alpha \sum_{{\tiny
\left.\begin{array}{c}
 x\in\Lambda_0
 \\
 y\in\Lambda\setminus\Lambda_0
 \end{array}\right.
}}
\sup_{\beta,\ |u|\leq 1}\left|\langle\delta_x, \mathcal{D}^-_\beta u(h_\Lambda)\delta_y\rangle\right|^{\alpha}.
\end{eqnarray}
Here, $h_\Lambda^{1/2}$ is absorbed in $|u|\leq 1$, i.e., 
\begin{equation}
\left|\langle\delta_x, \mathcal{D}^-_\beta h_\Lambda^{1/2} \sin(2th_{\Lambda}^{1/2})\delta_y\rangle\right|\leq \mathcal{C}_h \sup_{\beta,\ |u|\leq 1}\left|\langle\delta_x, \mathcal{D}^-_\beta u(h_\Lambda)\delta_y\rangle\right|.
\end{equation}

Then we proceed by expanding the multiplication of $\mathcal{D}^-_\beta$ and $u(h_\Lambda)$,
\begin{equation}\label{eq:21-term-22}
\leq 16 \mathcal{C}_h^\alpha \sum_{{\tiny
\left.\begin{array}{c}
 x\in\Lambda_0
 \\
 y\in\Lambda\setminus\Lambda_0
 \end{array}\right.
}}\ \sum_{z\in\Lambda} \sup_\beta\left|\langle\delta_x,\mathcal{D}^-_\beta\delta_z\rangle\rangle\right|^\alpha\ \sup_{|u|\leq 1}\left|\langle\delta_z,u(h_\Lambda)\delta_y\rangle\right|^\alpha
\end{equation}
where we used that $|\sum\cdot|^\alpha\leq \sum|\cdot|^\alpha$ for $\alpha\in(0,1]$.

Average the disorder, apply H\"{o}lder's inequality in (\ref{eq:21-term-22}), to obtain
\begin{eqnarray}
\mathbb{E}\left(\sup_{t,\beta}\left\|(\cdot)_{2,1}\right\|_\alpha^\alpha\right)
&\leq&
 16\mathcal{C}_h^\alpha\sum_{
{\tiny\begin{array}{c}
x\in\Lambda_0\\
y\in\Lambda\setminus\Lambda_0\\ 
z\in\Lambda
\end{array}}
} \mathbb{E}\left(\sup_\beta\left|\langle\delta_x,\mathcal{D}^-_\beta\delta_z\rangle\right|^{2\alpha}\right)^{\frac{1}{2}} \times \nonumber\\
&& \hspace{4cm}\times
\mathbb{E}\left( \sup_{|u|\leq 1}\left|\langle\delta_z,u(h_\Lambda)\delta_y\rangle\right|^{2\alpha}\right)^{\frac{1}{2}}
\end{eqnarray}
We assume that
\begin{equation}\label{eq:alpha-s-2}
\alpha\in (0,s/2]\ \text{ i.e., } \frac{2\alpha}{s}\leq 1
\end{equation}
to apply Jensen's inequality and get
\begin{equation}
\mathbb{E}\left(\sup_{t,\beta}\left\|(\cdot)_{2,1}\right\|_\alpha^\alpha\right)\leq 16\mathcal{C}_h^\alpha \sum_{
{\tiny\begin{array}{c}
x\in\Lambda_0\\
y\in\Lambda\setminus\Lambda_0\\ 
z\in\Lambda
\end{array}}
} \mathbb{E}\left(\sup_\beta\left|\langle\delta_x,\mathcal{D}^-_\beta\delta_z\rangle\right|^s\right)^{\alpha/s} 
\mathbb{E}\left(\sup_{|u|\leq 1} \left|\langle\delta_z,u(h_\Lambda)\delta_y\rangle\right|^s\right)^{\alpha/s}.
\end{equation}

Note that by assuming the validity of the eigenfunction correlators (\ref{def:EFC}) for all $a_1\leq b_1$, $a_2\leq b_2,\ldots$, and $a_d\leq b_d$ in $\Lambda=[a_1,b_1]\times \ldots\times[a_d,b_d]$, and by the translation invariance of the distribution of random parameters, (\ref{def:EFC}) also applies to the effective Hamiltonians $h_{\Lambda_m}^{-1/2}$ of the subsystems. Furthermore,  the hyperbolic tangent function falls within the set of functions $|u|\leq 1$. This leads to the decay bound
\begin{equation}\label{eq:EFC-D}
\mathbb{E}\left(\sup_\beta\left|\langle\delta_x,\mathcal{D}^-_\beta\delta_z\rangle\right|^s\right)\leq C e^{-\eta|x-z|}, \text{ for all }x,z\in \Lambda. 
\end{equation}
(\ref{def:EFC-identity}) and (\ref{eq:EFC-D}) give
\begin{eqnarray}\label{eq:21-term-F}
\mathbb{E}\left(\sup_{t,\beta}\left\|(\cdot)_{2,1}\right\|_\alpha^\alpha\right) &\leq&  16\mathcal{C}_h^{2\alpha} C^{2\alpha/s} \sum_{
x\in\Lambda_0,\ 
y\in\Lambda\setminus\Lambda_0,\
z\in\Lambda} e^{-\frac{\alpha}{s} \eta|x-z|}\ e^{-\frac{\alpha}{s} \eta|z-y|}\nonumber\\
&\leq& 16\mathcal{C}_h^{2\alpha} C^{2\alpha/s} \sum_{x\in\Lambda_0,\ y\in\Lambda\setminus\Lambda_0} e^{-\frac{\alpha}{2s} \eta|x-y|}\ \sum_{z\in\Lambda} e^{-\frac{\alpha}{2s} \eta|x-z|}\ e^{-\frac{\alpha}{2s}  \eta|z-y|}\nonumber\\
&\leq& 16\mathcal{C}_h^{2\alpha}C^{2\alpha/s}\left(\sum_{x\in\mathbb{Z}^d}e^{-\frac{\alpha}{2s} \eta|x|}\right)^2 
\left(\sum_{x\in\mathbb{Z}^d}e^{-\frac{\alpha}{s} \eta|x|}\right)
|\partial\Lambda_0|.
\end{eqnarray}
In the first-to-second step we used the basic inequality
\begin{equation}
|x-z|+|z-y|\geq \frac{1}{2}\left(|x-y|+|x-z|+|z-y|\right).
\end{equation}
\noindent\underline{The $(\cdot)_{2,2}$-block term:}

 \begin{eqnarray}
   \left(\cdot\right)_{2,2}&=& \left[\mathcal{D}^-_\beta h^{1/2}_{\Lambda}\sin(2th_{\Lambda}^{1/2}) ,\mathbb{P}\right]h_\Lambda^{-1/2}\sin(2th_{\Lambda}^{1/2}) \mathcal{D}^+_\beta + \nonumber\\
   && \hspace{5cm}
   +\left[\mathcal{D}^-_\beta \cos(2th_{\Lambda}^{1/2}),\mathbb{P} \right]\cos(2th_{\Lambda}^{1/2}) \mathcal{D}^+_\beta.
    \end{eqnarray}
Here we note that the first term has two problematic operators $\mathcal{D}^-_\beta$ and $h_\Lambda^{-1/2}$. This is making $(\cdot)_{2,2}$ the trickiest block in all the four blocks. 

Again here, we refrain from using the bound
\begin{equation}
h_\Lambda^{-1/2}\sin(2th_{\Lambda}^{1/2})\leq 2t \idty_\Lambda,
\end{equation}
that is producing a linear dependency on time.

As before, we rewrite the $(\cdot)_{2,2}$ term as
 \begin{eqnarray}
   \left(\cdot\right)_{2,2}&=& \left(\left[\mathcal{D}^-_\beta \sin(2th_{\Lambda}^{1/2}) ,\mathbb{P}\right]-\mathcal{D}^-_\beta h_\Lambda^{1/2}\sin(2th_{\Lambda}^{1/2})\left[ h_{\Lambda}^{-1/2} ,\mathbb{P}\right]\right)\sin(2th_{\Lambda}^{1/2}) \mathcal{D}^+_\beta + \nonumber\\
   && \hspace{5cm}
   +\left[\mathcal{D}^-_\beta \cos(2th_{\Lambda}^{1/2}),\mathbb{P} \right]\cos(2th_{\Lambda}^{1/2}) \mathcal{D}^+_\beta.
 \end{eqnarray}
The crucial fact here is that $\mathcal{D}^-_\beta h_\Lambda^{1/2}$ is uniformly bounded in the volume of the system, see Lemma \ref{lem:bound},
\begin{equation}
\|\mathcal{D}^-_\beta h_\Lambda^{1/2}\|\leq \sqrt{\Delta(\mathcal{G}_M)+1}.
\end{equation}

The same procedure as in the previous blocks gives that under the choice (\ref{eq:alpha-s-2}) on $\alpha$, we obtain the rough bound
\begin{equation}\label{eq:22-term-F}
\mathbb{E}\left(\sup_{t,\beta}\|(\cdot)_{2,2}\|_\alpha^\alpha\right)\leq 40\mathcal{C}_h^{2\alpha}C^{2\alpha/s}\left(\sum_{x\in\mathbb{Z}^d}e^{-\frac{\alpha}{2s} \eta|x|}\right)^3(\Delta(\mathcal{G}_M)+1)^{\alpha/2}\ |\partial\Lambda_0|.
\end{equation}

Finally, observe that our assumptions on the values of $\alpha$ in (\ref{eq:alpha-s-1}) and (\ref{eq:alpha-s-2}) allow us to choose $\alpha=s/2$.

Substitute the bounds for the four blocks (\ref{eq:11-term-F}), (\ref{eq:12-term-F}), (\ref{eq:21-term-F}), and (\ref{eq:22-term-F})  in (\ref{eq:LN-bound-E}) with $\alpha=s/2$ to obtain the  (rough) bound in Theorem \ref{thm:main-result}, with the pre-factor $C'$ defined in (\ref{def:C'}).


\end{document}